\newif\ifshort
\newtheorem{theorem}{Theorem}
\newtheorem{lemma}{Lemma}
\newtheorem{claim}{Claim}
\newtheorem{corollary}{Corollary}
\newtheorem{proposition}{Proposition}
\newtheorem{observation}{Observation}
\theoremstyle{definition}
\newtheorem{definition}{Definition}
\theoremstyle{definition}
\newcommand{\cqed}{\renewcommand{\qedsymbol}{$\lrcorner$}\qed}
\newenvironment{claimproof}{\noindent \emph{Proof of Claim~\theclaim.}}{\hfill\cqed\medskip}
\newcommand{\Oh}{\mathcal{O}}
\newcommand{\OPT}{\textsc{OPT}}
\newcommand{\dist}{\mathrm{dist}}
\newcommand{\tw}{\operatorname{tw}}
\newcommand{\td}{\operatorname{td}}
\newcommand{\fvs}{\operatorname{fvs}}
\newcommand{\vc}{\operatorname{vc}}
\newcommand{\tpath}{$T$-path\xspace}
\newcommand{\tpathdel}{\textsc{\tpath-Deletion}\xspace}
\newcommand{\tpathedit}{\textsc{\tpath-Editing}\xspace}
\newcommand{\tpathadd}{\textsc{\tpath-Addition}\xspace}
\DeclareMathOperator{\operatorClassNP}{{\sf NP}}
\newcommand{\classNP}{\ensuremath{\operatorClassNP}}
\DeclareMathOperator{\operatorClassFPT}{{\sf FPT}\xspace}
\newcommand{\classFPT}{\ensuremath{\operatorClassFPT}\xspace}
\DeclareMathOperator{\operatorClassW}{{\sf W}}
\newcommand{\classW}[1]{\ensuremath{\operatorClassW[#1]}}
\DeclareMathOperator{\operatorClassParaNP}{{\sf Para-NP}\xspace}
\newcommand{\classParaNP}{\ensuremath{\operatorClassParaNP}\xspace}
\DeclareMathOperator{\operatorClassXP}{{\sf X}P\xspace}
\newcommand{\classXP}{\ensuremath{\operatorClassXP}\xspace}
\newtcolorbox{problem_box}[1]{enhanced,
  attach boxed title to top left={yshift=-3mm,yshifttext=-1mm, xshift=10mm},
  colback=gray!20!white,colframe=gray!80!black,
  boxrule=1pt,
  title={#1},
  coltitle=black,
  boxed title style={size=small,colframe=gray!80!black, colback=gray!40!white, boxrule=1pt},  arc = 2mm}
\newcommandx{\defsimpleproblem}[3][]{
    \begin{problem_box}{#1}
        {\textbf{Input:}} {#2}  \\
        {\textbf{Task:}} {#3}
    \end{problem_box}
}
\title{Structural Approach to Guiding a Present-Biased Agent}
\author{
  Tatiana Belova\\
  ITMO University\\
  \And
  Yuriy Dementiev\\
  ITMO University\\
  \And
  Artur Ignatiev\\
  ITMO University\\
  \And
  Danil Sagunov\\
  ITMO University\\
}
\date{}
\begin{document}

\maketitle

\begin{abstract}
Time-inconsistent behavior, such as procrastination or abandonment of long-term goals, arises when agents evaluate immediate outcomes disproportionately higher than future ones. This leads to globally suboptimal behavior, where plans are frequently revised or abandoned entirely. In the influential model of Kleinberg and Oren (2014) such behavior is modeled by a present-biased agent navigating a task graph toward a goal, making locally optimal decisions at each step based on discounted future costs. As a result, the agent may repeatedly deviate from initially intended plans.

Recent work by Belova et al. (2024) introduced a two-agent extension of this model, where a fully-aware principal attempts to guide the present-biased agent through a specific set of critical tasks without causing abandonment. This captures a rich class of principal--agent dynamics in behavioral settings.

In this paper, we provide a comprehensive algorithmic characterization of this problem. We analyze its computational complexity through the framework of parameterized algorithms, focusing on graph parameters that naturally emerge in this setting, such as treewidth, vertex cover, and feedback vertex set. Our main result is a fixed-parameter tractable algorithm when parameterized by the treewidth of the task graph and the number of distinct $(v,t)$-path costs.
Our algorithm encaptures several input settings, such as bounded edge costs and restricted task graph structure.
We demonstrate that our main result yields efficient algorithms for a number of such configurations.

We complement this with tight hardness results, that highlight the extreme difficulty of the problem even on simplest graphs with bounded number of nodes and constant parameter values, and motivate our choice of parameters.
We delineate tractable and intractable regions of the problem landscape, which include answers to open questions of Belova et al.\ (2024).
\end{abstract}

\section{Introduction}

Present bias---the tendency to overvalue immediate outcomes relative to future ones---is a well-studied phenomenon in behavioral economics~\cite{Laibson1997, Frederick2002}. It explains why individuals often abandon long-term beneficial plans in favor of short-term ease, leading to suboptimal results in domains such as health, finance, education, and productivity.

\cite{KleinbergO14} introduced a graph-theoretic model to formalize such behavior in sequential decision-making. In their model, a time-inconsistent agent with present bias navigates a directed acyclic graph (DAG) of tasks, choosing each step based on a discounted evaluation of future costs. The agent may abandon the plan before reaching the goal if their perceived cost exceeds the discounted reward. This model provides a powerful abstraction for reasoning about procrastination and dynamic decision-making in complex systems.

A natural extension of this framework, described in~\cite{ECAI24_BelovaDFGI24}, involves a principal (e.g., a teacher, system designer, or platform) who can intervene by modifying the graph to guide the agent toward desired outcomes. Such interventions may involve removing distracting tasks or adding helpful shortcuts. 
Our work focuses on this setting, where the principal seeks to ensure that a present-biased agent completes a sequence of important tasks and reaches the final goal.


\medskip\noindent\textbf{Problem Setting.}  
We study intervention strategies in the Kleinberg–Oren model, where an agent operates on a time-inconsistent planning model $M = (G, w, s, t, \beta, r)$:
\begin{itemize}
    \item $G = (V, E)$ is a directed acyclic graph with parallel arcs representing the task structure,
    \item $w: E \to \mathbb{N}_0$ assigns costs to arcs,
    \item $s$ and $t$ are the start and goal vertices,
    \item $\beta \in (0,1]$ is the present-bias factor,
    \item $r$ is the reward obtained upon reaching $t$.
\end{itemize}
When located at vertex $v$, the agent evaluates all $v$-$t$ paths and selects the one $P = e_1 e_2 \dots e_k$ that minimizes the \emph{perceived cost}
\[
\zeta_M(P) = w(e_1) + \beta \cdot \sum_{i=2}^k w(e_i).
\]
If $\zeta_M(P) > \beta \cdot r$, the agent abandons the task entirely. Otherwise, he commits to the first arc $e_1 = (v, u)$ and moves to vertex $u$, where the decision process is repeated recursively.
If several paths have a minimum perceived cost, the agent selects one of them deterministically, according to a given lexicographic order on the arcs.

Recent work by~\cite{ECAI24_BelovaDFGI24} initiated the study of \emph{principal-agent} problems, where a principal aims to ensure that the agent follows a path from $s$ to $t$ that includes a designated set of critical arcs $T \subseteq E(G)$, using limited graph modifications. Two types of interventions were considered:
\begin{itemize}
    \item \tpathdel: Can we delete at most $k$ arcs from $G$ so that the agent follows a path from $s$ to $t$ that includes all of $T$?
    \item \tpathadd: Given a set of auxiliary weighted arcs $A$, can we add at most $k$ arcs from $A$ (without creating cycles) so that the agent follows a path from $s$ to $t$ that includes all of $T$?
\end{itemize}

In this work, we unify these two problems into a single general formulation:

\defsimpleproblem{\tpathedit}{A time-inconsistent planning model $M = (G, w, s, t, \beta, r)$, a set of additional arcs $A$ such that $G+A$ is acyclic and $w$ maps $w$ maps $E(G)\cup A$ into $\mathbb{N}_0$, and a set of critical arcs $T \subseteq E(G)$.} {Compute the minimum number of arc edits (deletions from $E(G)$ or additions from $A$) needed so that the agent follows an $s$–$t$ path (path in $G$ from $s$ to $t$) that traverses all arcs in $T$.}

Throughout this paper, we say that the agent follows a path $P$ if the agent, acting according to the Kleinberg–Oren model, traverses every arc of $P$ without abandoning the task at any intermediate vertex.

Notably, unlike the previous formulations that fix a budget $k$, the optimization version allows a more global view of tractability.

\medskip\noindent\textbf{Problem Motivation.}  
In emerging computational systems involving human or autonomous agents, effective task delegation and coordination increasingly rely on explicit modeling of agent behavior. Modern multi-agent architectures often follow hierarchical paradigms, where higher-level entities (principals, mediators) interact with time- or resource-constrained agents to ensure alignment with overarching objectives.

Agents in such systems may exhibit irrationalities, including time inconsistency or locally myopic behavior. Understanding how and when such agents can be steered toward desired behaviors---especially under minimal interventions---forms a key building block in the design of robust, interpretable, and strategically aligned agent systems. The \tpathedit problem formalizes one such scenario, capturing goal-aligned influence via constrained structural modifications.

\medskip\noindent\textbf{Model Example.}
To illustrate our framework, consider a scenario in which Bob, an AI engineer at a large tech company, is tasked with developing a production-ready LLM-based agent for enterprise document analysis. His ultimate goal is to deploy the agent after thorough benchmarking on realistic tasks (vertex $t$). There are multiple possible strategies to reach this goal, each composed of subtasks such as data preprocessing, model fine-tuning, prompt engineering, integration, and evaluation. These subtasks form a directed acyclic graph (see~\Cref{fig:example}), where each arc represents an action, and its weight corresponds to the estimated engineering effort or time cost.

However, Bob exhibits present-biased behavior: when evaluating a future plan, he gives full weight to the immediate next task, but discounts the remaining effort by a factor $\beta=1/2$. Starting at node $s$, he considers several possible execution paths:

\begin{itemize}
\item $P_1 = sbet,\ P_2 = scft$: thorough and reliable pipelines involving data cleaning, fine-tuning, and full evaluation: $\zeta_M(P_1) = \zeta_M(P_2) = 10 + \frac{10+10}{2} = 20$;
\item $P_3 = sadt$: a simplified path using default weights and partial testing: $\zeta_M(P_3) = 10 + \frac{8+8}{2} = 18$;
\item $P_4 = sbft$: a shortcut based on prompt-only adaptation with minimal validation: $\zeta_M(P_4) = 10 + \frac{2+10}{2} = 16$.
\end{itemize}
Bob chooses the path $P_4$ that minimizes the perceived cost, and it is easy to see that at each next vertex he will continue to choose this path, and will eventually follow $sbft$.

Now consider the role of Alice, Bob’s team lead and project owner. Her priority is to ensure that the critical evaluation step (arc $(b, e)$) is completed, as it directly affects the model’s reliability and regulatory compliance. Alice provides Bob with a promised reward of $r = 36$ upon successful deployment. However, due to Bob’s present bias, he perceives this as only $\beta \cdot r = 18$ at each decision point.

Suppose Alice simply disables the prompt-only shortcut by removing arc $(b, f)$, hoping to block the path $P_4$, the perceived cost at $s$ becomes $10 + \frac{8+8}{2} = 18 = \beta \cdot r$, which achieve only on path $P_3$. Bob would follow the arc $(s,a)$. In contrast, the path $P_1 = sbet$, which only includes the crucial arc $(b, e)$, has a perceived cost of $20$---too high for Bob to choose under present bias.

To resolve this, Alice applies a strategic edit: she introduces a new auxiliary arc $(a, b)$, allowing Bob to access path $P = sabet$. This new path includes the critical task $(b, e)$, and thanks to the adjusted structure, it has a lower perceived cost $1 + \frac{10+10}{2} =11$ at key decision point $a$. As a result, Bob follows $P = sabet$, completing all required stages and aligning his behavior with Alice’s objectives.

\begin{figure}[ht]
 \center{\includegraphics[scale=0.4]{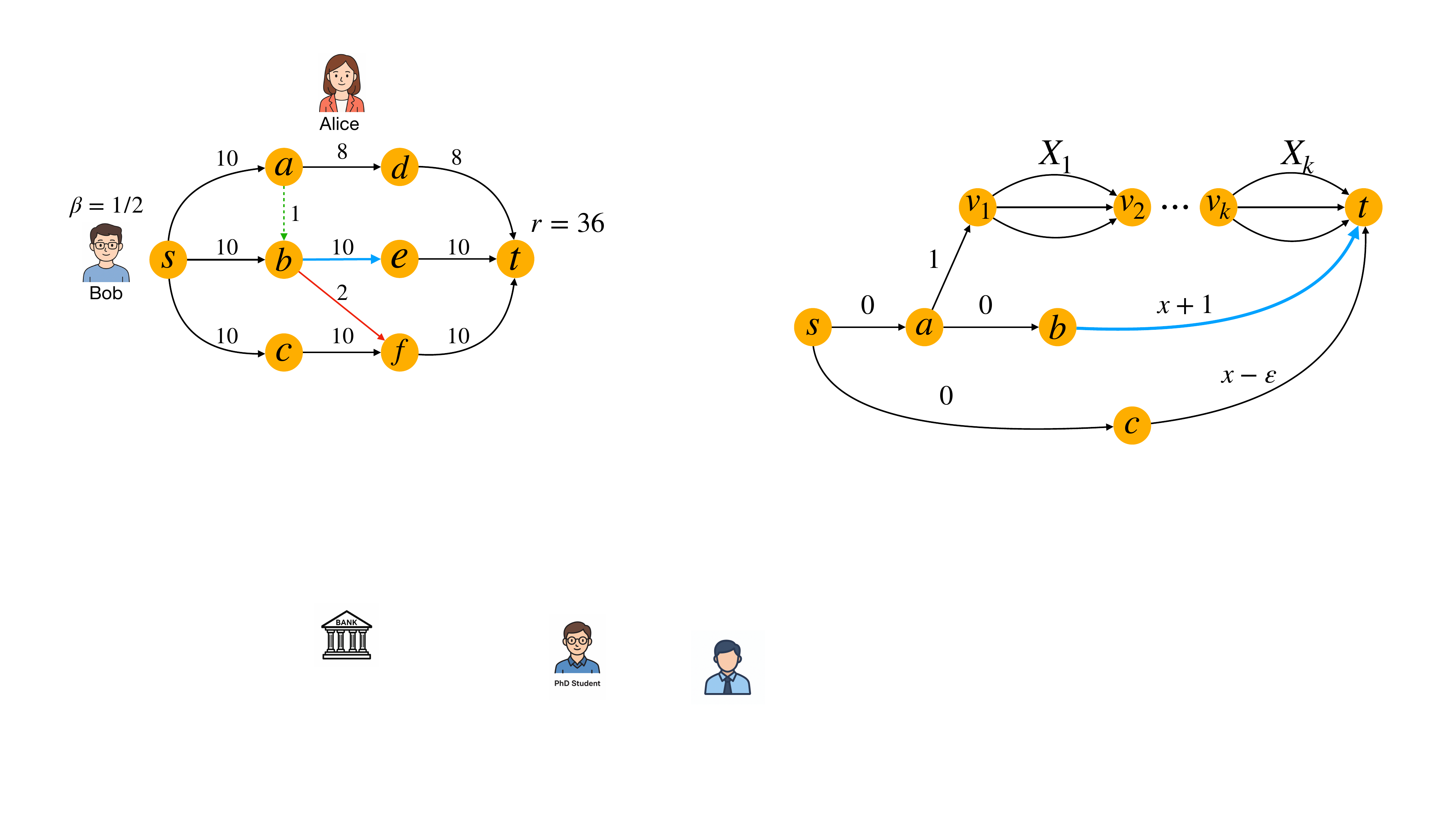}}
 \caption{Example: $T = \{(b, e)\},\ A = \{(a, b)\}$. 
 For initial graph $G$, the agent follows the path
 $sbft$.
 After deletion $(b, f)$---path $sadt$.
 And after adding $(a, b)$---$T$-path $sabet$.
 }\label{fig:example}
\end{figure}

\medskip\noindent\textbf{Why Graph Parameters Matter.}  
In real-world planning graphs modeling behavioral agents---such as workflows in education, research, digital assistants, or financial planning---the underlying structure is rarely arbitrary. Instead, it tends to reflect the limited complexity of human-designed or human-operated systems. This makes structural graph parameters highly relevant for understanding algorithmic tractability in practice~\cite{ICDT_ManiuSJ19}.

Below, we summarize the key graph parameters considered in our study.
All of these parameters are undirected graph parameters.
We consider them with respect to the underlying undirected graph of $G+A$.

The \emph{treewidth} ($\tw$) of a graph $G$ is the minimum width among all possible tree decompositions of $G$, where the width of a tree decomposition is defined as the size of its largest bag minus one. Intuitively, treewidth quantifies how “tree-like” a graph is. Many real-world planning graphs exhibit low treewidth due to hierarchical structure, and dynamic programming algorithms often become efficient on graphs of bounded treewidth~\cite{CyganFKLMPPS15}.

The \emph{feedback vertex set} number ($\fvs$) is the size of the smallest set of vertices whose removal renders the undirected graph skeleton acyclic. This parameter captures the extent to which cyclic dependencies exist and is often small in acyclic or near-acyclic planning systems.

The \emph{vertex cover} number ($\vc$) is the size of the smallest set $C \subseteq V(G)$ such that every edge $(u, v) \in E(G)$ is incident to at least one vertex in $C$. In task graphs with a few highly connected control points or verification steps, the vertex cover tends to be small.

The \emph{path length} ($p$) refers to the number of arcs on the longest directed path from source $s$ to target $t$ in the DAG. It reflects the maximum depth of planning or reasoning required by the agent.

The \emph{tree-depth} ($\td$) of a graph $G$ is the minimum height of a rooted tree $T$ such that $G$ is a subgraph of the closure of $T$; that is, for every edge $(u,v) \in E(G)$, one of $u$ or $v$ must be an ancestor of the other in $F$. Equivalently, tree-depth measures how ``deeply nested'' a graph is in terms of ancestor-descendant relations and reflects hierarchical complexity.

These parameters are not only natural from an algorithmic perspective, but also arise organically in structured task graphs.
Treewidth and feedback vertex set are small in many systems where tasks are layered or hierarchically decomposed—such as curriculum maps~\cite{CoursePrerequisite, CoursePrerequisite-2}, multi-stage workflows, or modular design pipelines. Even when cycles are present, they are typically localized and limited in scope.
Vertex cover tends to be small in graphs where dependencies are funneled through a small number of coordination or bottleneck nodes—common in multi-agent or supervisory planning setups~\cite{CyganFKLMPPS15}.
Path length is inherently bounded in most practical systems: no realistic agent executes hundreds of sequential steps before reaching a goal. In human-facing systems, this is constrained by attention spans, task fatigue, or real-time processing limits.

Taken together, these structural properties motivate a parameterized approach to the \tpathedit problem. By isolating parameters that remain small in realistic deployments, we identify meaningful tractable regimes where efficient intervention strategies for present-biased agents can be computed. This connection between structure and tractability bridges theoretical models with real-world applicability.

\medskip\noindent\textbf{Parameterized complexity.}  
To rigorously analyze the computational complexity of the \tpathedit problem, we rely on the framework of parameterized complexity. This framework allows us to distinguish between sources of hardness by isolating specific aspects of the input—such as structural graph parameters—as formal \emph{parameters}.

Formally, a parameterized problem is a language $Q \subseteq \Sigma^* \times \mathbb{N}$, where an instance is given as a pair $(I, k)$ consisting of the main input $I$ and a parameter $k$. A problem is said to be \emph{fixed-parameter tractable} (\classFPT) if it can be solved in time $f(k) \cdot |I|^{\Oh(1)}$ for some computable function $f$ depending only on $k$. This contrasts with the broader class \classXP, where the running time is allowed to be $|I|^{g(k)}$ for some computable function $g$.

To classify intractable problems, the $\operatorClassW$-hierarchy provides a series of hardness classes:
$
\classFPT = \classW{0} \subseteq \classW{1} \subseteq \classW{2} \subseteq \cdots \subseteq \classW{P}.
$
It is widely believed that $\classFPT \neq \classW{1}$, so proving $\classW{1}$-hardness is strong evidence that no fixed-parameter tractable algorithm exists. Such hardness proofs are typically obtained via parameterized reductions from known $\classW{1}$-complete problems.

In addition, we use the notion of \classParaNP-hardness, which refers to parameterized problems that remain \classNP-hard even when the parameter is fixed to a constant. This notion is useful for ruling out tractability in low-parameter regimes.

For a general reference on the theory and techniques of parameterized complexity, we refer the reader to~\cite{CyganFKLMPPS15}.

\medskip\noindent\textbf{Our Contribution.} 
We present algorithmic results for \tpathedit, highlighting structural and numerical conditions under which efficient interventions are possible. We comprehensively describe a complexity landscape of these problems. Our main result is an algorithm parameterized by the treewidth ($\tw$) of the graph $G+A$ and the diversity of path costs:

\begin{restatable}{theorem}{mainResult}\label{thm:main-result}
\tpathedit admits an algorithm running in $|L|^{\mathcal{O}(\tw)} \cdot m^{\Oh(1)}$ time, where $\tw$ equals the treewidth of $G+A$ and
\[
L = \left| \bigcup_{v \in V(G)} \{ w(P) \mid P \text{ is a } (v,t)\text{-path in } G+A  \} \right|.
\]
\end{restatable}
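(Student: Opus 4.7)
The plan is to perform dynamic programming on a nice tree decomposition of the underlying undirected graph of $G+A$ with width $\tw$. The central observation is that for any edited subgraph $G' \subseteq G+A$ and any vertex $v$, the value $d(v) := \min\{w(P) : P \text{ a } (v,t)\text{-path in } G'\}$ is realized by some $(v,t)$-path of $G+A$, hence $d(v) \in L \cup \{+\infty\}$. This bounds the number of distinct values $d(v)$ can take and is the basis for keeping the DP table small.

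\textbf{DP state.} For each bag $X$, I would index the DP by tuples $(d_v, \sigma_v, \zeta_v)_{v \in X}$ plus a constant-size matching on bag vertices. Here $d_v \in L \cup \{+\infty\}$ is the guessed final value of $d(v)$; $\sigma_v$ is a constant-size annotation recording whether $v$ lies on the agent's intended $s$-$t$ path $P^*$, which of $v$'s path-edges have already been processed, and standard path-tracking connectivity information; and $\zeta_v$ is the guessed perceived cost $w(v,u^*) + \beta d(u^*)$ of $v$'s chosen outgoing arc on $P^*$ (defined when $v$ is on $P^*$ and $v \neq t$). A crucial observation keeps $\zeta_v$'s range small: writing $\zeta_v = (w(v,u^*) + d(u^*)) - (1-\beta)\, d(u^*)$, both $w(v,u^*) + d(u^*)$ and $d(u^*)$ lie in $L$—the first because extending $(v,u^*)$ by the $d(u^*)$-achieving $(u^*,t)$-path yields a $(v,t)$-path in $G+A$. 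Hence $\zeta_v$ takes at most $|L|^2$ values, and the per-bag state space has size $|L|^{\mathcal{O}(\tw)}$. The DP value stores the minimum number of arc edits committed in the already-processed portion consistent with the state.

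\textbf{Transitions.} I would fill the table bottom-up over the nice decomposition. At an introduce-edge node for $e = (u,v)$, branch on whether $e \in G'$ (edit cost $1$ if $e \in A$ is being added or $e \in E(G) \setminus T$ is being deleted; $T$-arcs are forced to remain) and, if so, on whether $e$ is declared the path-arc out of $u$. When $e$ is kept, enforce $d_u \leq w(u,v) + d_v$ so that $d_u$ is consistent with being a minimum over out-arcs; when $e$ is the path-arc out of $u$, additionally enforce $\zeta_u = w(u,v) + \beta d_v$; when $e$ is a non-path out-arc of an on-path vertex $u$, enforce $w(u,v) + \beta d_v \geq \zeta_u$ (with the lexicographic tie-break disfavoring $e$). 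At a forget-vertex node for $v$, verify that $d_v$ is attained by some processed out-arc, that $v$'s path-degree matches $\sigma_v$, that every $T$-arc incident to $v$ has been placed on the path, and, if $v$ is on $P^*$, that $\zeta_v \leq \beta r$ so the agent does not abandon. Join nodes merge compatible sub-states pointwise and sum the edit counts. The answer is the minimum DP value at the root over accepting states, where $s$ and $t$ are the endpoints of $P^*$, $d_s$ is finite, and every $T$-arc is on $P^*$.

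\textbf{Main obstacle.} The principal difficulty is verifying the agent's decision rule at every on-path vertex while staying within the $|L|^{\mathcal{O}(\tw)}$ state budget: perceived costs $w(v,u) + \beta d(u)$ a priori range over up to $|E| \cdot |L|$ values, which would inflate the per-bag state to size $(|E| \cdot |L|)^{\tw}$ and overshoot the claimed bound. The identity $\zeta_v = \ell_1 - (1-\beta)\ell_2$ with $\ell_1, \ell_2 \in L$ is exactly what cuts the range of $\zeta_v$ down to $|L|^2$ and makes the DP fit. The remaining technical pieces—path-tracking matchings at join nodes, tie-breaking, $T$-coverage, and the case where a non-path out-arc of $u$ is introduced before $u$'s path-arc is (which is absorbed by the fact that $\zeta_u$ is guessed upfront as part of the state and its consistency verified when the path-arc itself is introduced)—add only polynomial overhead, so each transition runs in $|L|^{\mathcal{O}(\tw)} \cdot m^{\mathcal{O}(1)}$ time, yielding the claimed total.
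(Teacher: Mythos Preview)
Your core observation---that the perceived cost at an on-path vertex can be written as $\zeta_v = \ell_1 - (1-\beta)\ell_2$ with $\ell_1, \ell_2 \in L$, hence takes at most $|L|^2$ values---is correct and matches the paper's $(F_0,F_1)$ encoding. But there is a genuine gap in your handling of the lexicographic tie-break. You write that for a non-path out-arc $e$ of an on-path vertex $u$ you ``enforce $w(u,v)+\beta d_v\ge\zeta_u$ (with the lexicographic tie-break disfavoring $e$).'' Checking that disfavoring requires comparing $e$ lexicographically to the path-arc $e^*$ out of $u$; your state records only the \emph{value} $\zeta_u$, not the identity of $e^*$. When $e$ is introduced before $e^*$ in the decomposition you simply cannot perform the comparison, and deferring it means remembering $e^*$ (or its lex rank) for every on-path vertex currently in the bag. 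Since parallel arcs are allowed, that is $\Theta(m)$ options per vertex, so the state space blows up to $m^{\Theta(\tw)}$, which is not bounded by $|L|^{\mathcal{O}(\tw)}\cdot m^{\mathcal{O}(1)}$ whenever $|L|\ll m$ (e.g.\ with all arc weights equal one has $|L|=\mathcal{O}(n)$ while $m$ can be quadratic).

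The paper singles out exactly this obstruction---it remarks that guessing the agent's path ``requires comparing arbitrary arcs of $G+A$ lexicographically''---and resolves it by \emph{not} tracking a single agent path at all. In place of $P^*$ it stores a subset $R\subseteq B_a$ of vertices the agent could conceivably reach under \emph{any} tie-breaking, together with a pair $(F_0(u),F_1(u))\in L^2$ for each $u\in R$ witnessing that some out-arc of $u$ attains the minimum perceived cost. The correctness constraints then become purely numerical non-strict inequalities: every out-arc of $u\in R$ has perceived cost at least $(F_0(u)-F_1(u))+\beta F_1(u)$, and every arc the agent could follow from a vertex of $R$ is not ``$T$-avoiding'' (i.e.\ does not jump past the tail of some $T$-arc). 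No lexicographic comparison between two arbitrary arcs is ever made, and that is precisely what keeps the table at $|L|^{\mathcal{O}(\tw)}$.
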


This shows that \tpathedit is fixed-parameter tractable when both the graph structure (via treewidth $\tw$) and weight diversity are bounded.
We also show:
\begin{itemize}
    \item \classNP-hardness on graphs of treewidth $2$,
    \item \classW{1}-hardness w.r.t.\ the number of vertices $n$ and the combined parameter $(p, \vc)$,
    \item \classXP-algorithms with respect to feedback vertex set, vertex cover, and tree-depth,
    \item \classFPT-algorithms for cost-limited instances with bounded cost diversity $|\{w(e)\}|$.
\end{itemize}

\medskip\noindent\textbf{Related Work.}  
The study of present-biased behavior originates from classical economic theory. Samuelson’s discounted utility model~\cite{Samuelson1937} laid the foundation for formal models of intertemporal choice. This was later extended into the hyperbolic and quasi-hyperbolic discounting frameworks~\cite{Laibson1994,McClure2004}, which more accurately capture behavioral inconsistencies over time. The model we adopt follows the framework introduced by Kleinberg and Oren~\cite{KleinbergO14,KleinbergO18}, where a present-biased agent navigates a task graph by selecting paths according to discounted cost evaluations. Their model can be viewed as a discrete analogue of quasi-hyperbolic discounting and incorporates Akerlof’s notion of salience~\cite{Akerlof91}.

While the Kleinberg–Oren model has strong empirical motivation, it does not capture all known psychological aspects of time-inconsistent behavior~\cite{Frederick2002}. Nevertheless, it has spurred extensive algorithmic research, including complexity and approximation analyses~\cite{GravinILP16,KleinbergOR16,KleinbergOR17,halpern2023chunking,meyer2022present,AAAI22_DFI, FominS20}.

Several recent works have extended the Kleinberg–Oren model in complementary directions. \cite{aaai/AkagiMK24, aaai/AkagiKK25} propose a continuous-time variant with closed-form solutions for optimal reward placement and abandonment conditions. \cite{ECAI24_BelovaDFGI24} introduce the \tpathedit framework and analyze structural interventions in two-agent settings.

The principal–agent formulation we study can be seen as a natural and timely generalization of the classical motivating subgraph and $P$-motivating subgraph problems, where the goal is to identify a subgraph that leads a present-biased agent to the goal (through a specific path $P$) without abandonment.~\cite{tang2017computational, AlbersK17, Albers2018, oren2019principal, FominS20}.

Finally, our work is also situated in the broader context of graph modification problems, a well-established domain in algorithmic graph theory. These problems aim to alter a graph's structure—via edge or vertex deletions/additions—to satisfy desired properties. We refer the reader to surveys such as~\cite{burzyn2006np,natanzon2001complexity,CrespelleDFG23} for comprehensive overviews.

\section{Polynomial Algorithm on Bounded Vertex Cover Graphs}

In this section, we prove that \tpathedit can be solved in polynomial time when vertex cover of $G+A$ has constant size.

The first observation about graphs with bounded vertex cover is that such graphs have bounded path lengths.

\begin{restatable}{observation}{obser}\label{obs:p-by-vc}
In a DAG $G$, every path consists of at most $2\cdot\vc(G)-1$ arcs.
\end{restatable}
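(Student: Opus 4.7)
My plan is to exploit the defining property of a vertex cover: every arc is incident to at least one vertex of the cover. I would fix a minimum vertex cover $C$ of the underlying undirected graph of $G$, so that $|C| = \vc(G)$, and consider an arbitrary directed path $P = v_0 v_1 \cdots v_k$ in $G$. Since $G$ is a DAG, the vertices of $P$ are pairwise distinct, hence $|V(P) \cap C| \leq |C| = \vc(G)$.

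The central structural observation is that $P$ cannot contain two consecutive vertices both lying outside $C$: if $v_i, v_{i+1} \in V(G) \setminus C$, the arc $(v_i, v_{i+1})$ would be uncovered by $C$, contradicting that $C$ is a vertex cover. Hence along the ordering of $P$, any two vertices outside $C$ are separated by at least one vertex of $C$. Writing $a = |V(P) \cap C|$ and $b = |V(P) \setminus C|$, this ``no two consecutive outsiders'' condition gives $b \leq a + 1$, because each outside vertex fits into a distinct slot among the $a+1$ gaps surrounding the $C$-vertices on $P$ (including the two end-gaps).

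Combining the two bounds yields $|V(P)| = a + b \leq 2a + 1 \leq 2\vc(G) + 1$, so the number of arcs on $P$ is $k = |V(P)| - 1 \leq 2\vc(G)$. The only subtle point, and the one I expect to be the main obstacle, is to sharpen this by one arc to obtain the stated bound $2\vc(G) - 1$. This requires either a careful analysis of the extremal case in which $P$ alternates perfectly between $C$ and $V \setminus C$ while starting and ending outside $C$, or appealing to the minimality of $C$ to argue that such an alternating configuration would permit a vertex-swap producing a smaller cover, contradicting $|C| = \vc(G)$. Aside from this last one-arc refinement, the proof is an immediate consequence of the two observations above together with acyclicity of $G$.
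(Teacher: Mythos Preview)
Your argument up to the bound of $2\vc(G)$ arcs is correct and is exactly the approach the paper takes: fix a minimum vertex cover, observe that two consecutive path vertices cannot both lie outside it, and count. The difficulty you isolate in the last step---the ``one-arc refinement'' down to $2\vc(G)-1$---is real, but not for the reason you suspect: the sharper bound is simply false. Take the directed path $v_0\to v_1\to v_2$; here $\vc(G)=1$ (with $C=\{v_1\}$) while the path has $2$ arcs, exceeding $2\cdot 1-1=1$. More generally, a directed path on $2k+1$ vertices has vertex cover number $k$ and exactly $2k$ arcs, so your bound $2\vc(G)$ is already tight and neither an extremal analysis nor a vertex-swap argument can improve it.

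The paper's own proof hides the same gap. After assuming a path with $2\vc(G)$ arcs, hence $2|S|+1$ vertices, it asserts that ``by pigeon-hole principle, there are two consecutive vertices on $P$ that do not belong to $S$.'' This does not follow: with at most $|S|$ vertices of $P$ in $S$ and $|S|+1$ outside, the alternating pattern $\text{out},\text{in},\text{out},\ldots,\text{in},\text{out}$ on $2|S|+1$ vertices places no two outsiders consecutively. So the paper's pigeonhole step and your proposed refinements fail for the same reason; the correct statement of the observation should read $2\vc(G)$ rather than $2\vc(G)-1$ (this off-by-one does not affect any of the asymptotic bounds the paper derives from it).
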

\begin{proof}
    Let $S$ be the minimum vertex cover of $G$, that is, any arc in $G$ either starts in $S$ or ends in $S$ or both.
    We have $|S|=\vc(G)$.

    Targeting towards a contradiction, assume that $G$ has a path $P$ with at least $2\vc(G)$ arcs.
    Since $G$ is acyclic, this path is simple and contains at least $2\vc(G)+1=2|S|+1$ unique vertices.
    By pigeon-hole principle, there are two consecutive vertices on $P$ that do not belong to $S$.
    That is, there is an arc in $G$ that starts outside $S$ and ends outside $S$.
    The obtained contradiction concludes the proof.
\end{proof}

Before moving on to the main result of this section, we show an auxiliary algorithmic result.
We will use it as a subroutine in our main algorithm.

\begin{restatable}{proposition}{prop}\label{prop:bipartite-vertex-cover}
    There is a polynomial-time algorithm that,
    given two disjoint sets $A$, $B$ and a sequence of $m$ pairs $(a_1,b_1),(a_2,b_2),\ldots, (a_m, b_m)\in A\times B$, and a set $R\subset (A\cup B)$ finds minimum possible $X\subset (A\cup B)$ such that
    \begin{itemize}
        \item for each $i\in [m]$, $\{a_i,b_i\}\cap X\neq \emptyset$ and
        \item $X\cap R=\emptyset$,
    \end{itemize}
    or reports that $X$ does not exist.
\end{restatable}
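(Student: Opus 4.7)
The plan is to recognize the problem as a minimum vertex cover task on a bipartite graph with a ``forbidden'' set $R$, and then reduce it to standard bipartite minimum vertex cover (solvable in polynomial time by König's theorem) via a short preprocessing step.

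First I would construct the bipartite graph $H=(A\cup B, E_H)$ with $E_H=\{\{a_i,b_i\}\mid i\in[m]\}$ (collapsing repeated pairs, which does not affect vertex cover). Under this reformulation, the conditions on $X$ say exactly that $X$ is a vertex cover of $H$ that avoids $R$. Two local reductions handle the interaction with $R$:
\begin{enumerate}
\item \emph{Infeasibility check.} If there exists $i$ with both $a_i\in R$ and $b_i\in R$, report that no valid $X$ exists, since covering the edge $\{a_i,b_i\}$ would require picking a vertex from $R$.
\item \emph{Forcing.} For every pair with exactly one endpoint in $R$, the other endpoint must lie in every valid $X$. Collect all such forced vertices into a set $X_0$ (automatically disjoint from $R$), and delete from the pair list every pair already covered by $X_0$.
\end{enumerate}

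The residual instance consists only of pairs both of whose endpoints lie outside $R$. Let $H'$ be the resulting bipartite graph. Now compute a minimum vertex cover $C'$ of $H'$ in polynomial time: find a maximum matching (e.g.\ via Hopcroft--Karp or a standard max-flow) and extract a minimum vertex cover from it via König's theorem. Return $X := X_0 \cup C'$.

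For correctness, any feasible solution must contain $X_0$ (each forced vertex is the only way to cover its pair without using a vertex of $R$), and after removing $X_0$ what remains must be a vertex cover of $H'$ disjoint from $R$; hence $|X|\ge |X_0|+|C'|$. Conversely $X_0\cup C'$ clearly covers all pairs and avoids $R$, giving optimality. Every step runs in polynomial time. The main ``obstacle'' is conceptual rather than technical: one has to notice that after isolating the forbidden set $R$ via forcing, the remainder is a clean bipartite vertex-cover instance, so no sophisticated combinatorial machinery beyond König's theorem is needed.
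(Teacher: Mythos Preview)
Your proposal is correct and follows essentially the same approach as the paper's own proof: check infeasibility when a pair has both endpoints in $R$, force the non-$R$ endpoint whenever a pair has exactly one endpoint in $R$, and then solve the residual instance (whose endpoints all lie outside $R$) as a standard bipartite minimum vertex cover via K\H{o}nig's theorem. The paper organizes the forcing step by iterating over elements of $R$ rather than over pairs, but this is purely cosmetic.
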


\begin{proof}
    We present a polynomial-time algorithm.
    First, the algorithm determines whether $X$ exists.
    If there is a pair $(a_i, b_i)$ with both $a_i, b_i\in R$, then $X$ cannot satisfy both constraints simultaneously.
    In this case, the algorithm reports that $X$ does not exist and stops. 

    The next part of the algorithm consists of two stages.
    In its first stage, the algorithm constructs a set $X'\subset A\cup B$ of elements that should belong to any valid solution $X$.

    The set $X'$ is constructed iteratively until $R$ is empty.
    When $R$ is not empty, there is a element $a\in R$.
    Without loss of generality, $a\in A$.
    Then for each pair $(a_i, b_i)$ with $a_i=a$, the algorithm takes $b_i$ into $X'$, since the only possible way to hit $\{a_i, b_i\}$ is to take $b_i$.
    Then the algorithm removes $a$ from $A$ and $R$ and removes all pairs that contain $a$.
    If $R$ is not empty, the algorithm repeats the iteration with the next element of $R$.
    Otherwise, the algorithm proceeds to the second stage.
    The case $a\in B$ is symmetrical.

    In its second stage, the algorithm first removes all pairs $(a_i, b_i)$ with $|\{a_i, b_i\}\cap X'|>1$, since these pairs all already hit by $X'$.
    Then, the algorithm removes all vertices of $X'$ from $A$ and $B$.

    After such preprocessing, the algorithm has $R=\emptyset$, $A\cap X'=B\cap X'=\emptyset$, and a remaining list of pairs with $(a_i,b_i)\in A\times B$.
    The remaining task of the algorithm is to find $X''\subset (A\cup B)$ of minimum size that intersects each of the remaining pairs, without any additional restrictions.
    
    This problem is equivalent to \textsc{Bipartite Vertex Cover} problem on a bipartite graph $G$ with vertex set $A\cup B$, and edge set equal to the set of remaining pairs.
    There is a one-to-one correspondence between a vertex cover in $G$ and a subset of $A\cup B$ hitting each of the remaining pairs.
    Minimum vertex cover in a bipartite graph can be found in polynomial time using maximum bipartite matching algorithm together with a constructive proof of K\H{o}nig's theorem.
    For more information on this topic, we refer the reader to classical algorithm books \cite{KleinbergTardos} and \cite{Cormen-book}.

    The algorithm constructs $G$ in polynomial time, and finds the minimum vertex cover $X''$, which equals the minimum cardinality set hitting all of the remaining pairs.
    Finally, the algorithm outputs $X=X'\cup X''$ and stops.

    Correctness of the algorithm follows from the discussion.
    The proof is complete.
\end{proof}

We note that \Cref{prop:bipartite-vertex-cover} is folklore and is essentially a problem of finding a vertex cover of minimum cardinality in a bipartite graph.
For completeness, we provide the proof with necessary references 
\ifshort
in the appendix.
\else
below.
\fi

We now show the XP-algorithm for \tpathedit parameterized by $\vc$.
In the following proof, our algorithm makes comparisons between perceived costs.

\begin{theorem}\label{thm:vertex-cover}
  \tpathedit admits an algorithm with running time $m^{\Oh(\vc^2)}$, where $\vc$ is the vertex cover number of $G+A$.
\end{theorem}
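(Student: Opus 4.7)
The plan is to proceed in two nested stages: first guess the $s$-$t$ path $P$ the agent is meant to follow, and then, for each guess, compute the minimum number of edits that force this behavior. By \Cref{obs:p-by-vc}, every simple directed path in $G+A$ uses at most $2\vc$ vertices. Hence the target path $P$---an $s$-$t$ path through every critical arc in $T$---can be enumerated by iterating over all ordered vertex sequences of length at most $2\vc$, giving $n^{\Oh(\vc)}=m^{\Oh(\vc)}$ candidates. For each candidate I would verify in polynomial time that it forms a valid $s$-$t$ path in $G+A$ covering $T$, and only then proceed with the second stage.

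Fix such a candidate $P$. I would first argue that in any optimal edit set realizing $P$, the additions are exactly $A\cap E(P)$---adding any further $A$-arc can only create new $v$-$t$ competitors and is never beneficial---and that no arc of $E(P)$ is deleted. The remaining task reduces to choosing a minimum $D\subseteq E(G)\setminus E(P)$ such that in $G'=(G\setminus D)\cup(A\cap E(P))$, the suffix $P_v$ (the piece of $P$ from $v$ to $t$) is the perceived-cost minimizer among $v$-$t$ paths for every intermediate $v$ of $P$, with the lex tie-break favoring $P_v$. Equivalently, every \emph{bad} $v$-$t$ path $Q$ in $G+(A\cap E(P))$ must either use an $A$-arc off $P$ (hence be already absent from $G'$) or contain a $G$-arc off $P$ that lies in $D$. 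Since each bad path has at most $2\vc-1$ arcs, all of them can be enumerated in $m^{\Oh(\vc)}$ time.

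The final step is a minimum hitting-set instance: pick the smallest $D$ hitting every bad path while avoiding $E(P)$. To drive this down to the structured pair-covering problem of \Cref{prop:bipartite-vertex-cover}, I would enumerate a second, per-vertex refinement for each of the $\Oh(\vc)$ intermediate vertices of $P$: an $\Oh(\vc)$-sized guess---for instance, the vertex in a fixed minimum vertex cover $S$ of $G+A$ through which each bad path pivots---that collapses every bad path to a pair of candidate arcs to hit. Combined with the outer path enumeration this yields $m^{\Oh(\vc^2)}$ total guesses. For every guess the residual instance becomes a bipartite pair-covering problem, with the forbidden set $R$ of \Cref{prop:bipartite-vertex-cover} encoding the arcs of $E(P)$ that cannot be deleted, and is solved in polynomial time via that proposition; the overall running time is then $m^{\Oh(\vc^2)}$.

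The chief obstacle is the design of the refinement: one must show that some guess always admits a pair representation of every bad path and that the minimum over guesses matches the true optimum. I expect this to rest on the fact that, in a DAG of vertex cover $\vc$, every short path is essentially determined by its $S$-footprint, so pinning the $S$-vertex at which a bad path leaves $P$'s trajectory naturally narrows the hit to the two incident arcs of that pivot, exactly the pair-constraint format consumed by \Cref{prop:bipartite-vertex-cover}.
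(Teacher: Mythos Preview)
Your outline has the right shape---enumerate $P$, make an $m^{\Oh(\vc^2)}$-sized secondary guess, then invoke \Cref{prop:bipartite-vertex-cover}---but two of the concrete steps are wrong, not merely incomplete. First, ``$P_v$ is the perceived-cost minimizer among $v$-$t$ paths'' is strictly stronger than ``the agent follows $P$'': the agent takes $e_i$ at $u_{i-1}$ whenever $w(e_i)+\beta\,\dist_{G'}(u_i,t)$ beats every alternative, which can hold even when $\zeta(P_{u_{i-1}})=w(e_i)+\beta\,w(P_{u_i})$ is not minimal. For instance, with $\beta=\tfrac12$, $P=sabt$ of arc costs $0,1,20$, extra arcs $(a,z),(z,t)$ of costs $7,10$, and $T=\{(a,b),(b,t)\}$, the agent already follows $P$ with zero edits (at $a$ the comparison is $11$ versus $12$), yet $\zeta(P_s)=10.5>8.5=\zeta(sazt)$; your scheme would delete an arc needlessly. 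Second, ``additions are exactly $A\cap E(P)$'' is false: an added $A$-arc off $P$ can shrink $\dist(u_i,t)$ and thereby make $e_i$ win at $u_{i-1}$ without ever being traversed. In the same graph, move $(z,t)$ into $A$ and add many competitors $s\to d_j\to t$ of costs $9,0$; the single off-$P$ addition of $(z,t)$ solves the instance, while your scheme is forced to delete every competitor.

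The paper's secondary guess is not a pivot per bad path but rather, for every vertex in $S=V(P)\cup C$ (with $C$ a vertex cover), one entire shortest $v$-$t$ path in the \emph{intended} solution graph; their union $R$ has $\Oh(\vc^2)$ arcs. This simultaneously determines the additions ($R\cap A$) and pins down $\dist_R(v,t)$ for every $v\in S$. With distances fixed, an arc between two $S$-vertices is a forced deletion exactly when it violates the distance or the follow-$e_i$ inequality; and since every vertex outside $S$ has all neighbors in $S$, each such vertex contributes \emph{independent} pair constraints on its in/out arcs---precisely the format consumed by \Cref{prop:bipartite-vertex-cover}. Your ``pivot'' refinement never achieves this localization because it neither fixes the distances nor accounts for the $C$-vertices off $P$.
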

\begin{proof}

    We present an algorithm that finds an optimal solution to the given instance $((G,w,s,t,\beta,r), A, T)$ of \tpathedit.
    If no solution exists, the algorithm correctly reports it.

    We assume without loss of generality that the algorithm is additionally given a vertex cover $C$ or $G+A$ with $|C|\le 2\cdot \vc(G+A)$, as such vertex cover can be found in polynomial time (see, e.g.,  \cite{PapadimitriouS82} for $2$-approximation of \textsc{Vertex Cover}).

    Let $G^*$ be the graph obtained via optimal sequence of arc edits.
    Speaking informally, our algorithm does not know $G^*$ in advance, so its goal is to guess only the \emph{relevant} structure of $G^*$: the agent's $s$-$t$ path and \emph{important} shortest paths in $G^*$.
    Since all path length are bounded in $G+A$, (therefore in $G^*$), the search space for these guesses is bounded with $m^{\Oh(\vc^2)}$.
   The high-level sketch of our algorithm is the following.
    \begin{enumerate}
        \item Iterate over all $s$-$t$ paths $P$ in $G+A$ (the guess for the agent's path);
        \item Construct a set $S$ consisting of all vertices of $C$ and all vertices of the path $P$. That is, each arc of $P$ has both endpoints in $S$ and each arc of $G$ starts in $S$ or ends in $S$ or both;
        \item For each $v\in S$, iterate \emph{one} path $R_v$ from $v$ to $t$ in $G+A$ (the guess for the shortest path);
        \item Find a smallest possible set of arc deletions that 
        \begin{itemize}
            \item forces the agent to follow $P$, and \item for each $v\in S$, breaks every $v$-$t$ path shorter than $R_v$.
        \end{itemize}
    \end{enumerate}

    We move on to the formal part of the proof.
    We describe the steps of the algorithm formally and supply it with correctness claims when needed.

    \medskip\noindent\textbf{Iterating the agent's $s$-$t$ path and critical vertices.}
    In its outer loop, the algorithm iterates an arc set $P\subset (E(G)\cup A)$ such that $P\supset T$ and $P$ forms a simple $s$-$t$ path in $G+A$.
    In particular, $|P|\le 2|C|-1$.

    Slightly abusing the notation, we refer to $P$ both as an arc set and a path in $G+A$.
    Let $d$ be the length of $P$, and let $u_0, u_1, \ldots, u_d$ be the vertices on $P$ in the natural order, so $u_0=s$ and $u_d=t$.
    For each $i\in [d]$, the algorithm denotes by $e_i$ the edge between $u_{i-1}$ and $u_i$ in $P$.

    The algorithm then constructs a set $S:=\{u_0, u_1,\ldots, u_d\}\cup C$ of \emph{critical} vertices.

    \medskip\noindent\textbf{Iterating the critical shortest paths edges.}
   The algorithm then iterates $R\subset (E(G)\cup A)$ such that $|R|\le (|S|+1)\cdot (2|C|-1)$, and $R\supset P$ and
    \begin{enumerate}[label=(R\arabic*), ref=(R\arabic*), leftmargin=*]
        \item\label{prop:does-not-abandon} for each $i\in [d]$, $w(e_i)+\beta\cdot \dist_R(u_i, t)\le \beta \cdot r$.
    \end{enumerate}
     
     Here and further in this proof, we slightly abuse the notation and denote by $\dist_R(x,y)$ the distance between $x$ and $y$ in the graph $(V(G), R)$.

     An intuition behind $R$ is that it is (when guessed correctly) a union of all  shortest paths that are required for the agent to follow $P$.
     \ref{prop:does-not-abandon} specifically ensures that the agent does not abandon $P$.

    \medskip\noindent\textbf{Highlighting local obstructions.}
    Having $P$ and $R$ both fixed, the algorithm now aims to find the minimum size subset $X\subset E(G)$ (the edges one has to delete) such that, for $G'=(V(G), R\cup (E(G)\setminus X))$:
        (a) we do not delete any edge in $R$;
        (b) distances in $G'$ agree with distances in $R$;
        (c) the agent follows every edge $e_i\in P$ in $G'$.
    Expressed formally, this becomes
    \begin{enumerate}[label=(X\arabic*), ref=(X\arabic*), leftmargin=*]
        \item \label{prop:no-delete-r}  $X\cap R=\emptyset$; 
        \item\label{prop:distances-implemented}  For each $v\in S$, $\dist_{G'}(v,t)=\dist_{R}(v,t)$;
        \item\label{prop:no-turn-aways} For each $i\in [d]$, for each vertex $v\in V(G)$ and for each edge $e'\in E(G')$ that goes from $u_{i-1}$ to $v$ and $e'\neq e_i$, holds $$w(e_i)+\beta\cdot \dist_{G'}(u_{i},t)\prec w(e')+\beta\cdot\dist_{G'}(v,t).$$
    \end{enumerate}

    Here and forth, we use $\prec$ to indicate the comparison between perceived costs according to the agent's behavior.
    That is, if the parts to the left and to the right of $\prec$ are equal, the result of comparison is equivalent to the comparison of $e_i$ and $e'$ according to the ordering of edges.

    We claim that these constraints are essentially equivalent to the agent following $P$ in $G'$. 

    \begin{claim}\label{claim:equiv}
    Let $X\subset E(G)$ satisfy \ref{prop:distances-implemented}.
    Then
        $R$ satisfies \ref{prop:does-not-abandon} and $X$ satisfies  \ref{prop:no-turn-aways} if and only  the agent follows $P$ in $G'$.
    \end{claim}
    \begin{claimproof}
        Since \ref{prop:distances-implemented} holds, $\dist_{R}(v,t)$ and $\dist_{G'}(v,t)$ are equivalent if $v\in S$.
        The agent follows $P$ in $G'$ if and only if:
        \begin{itemize}
            \item The agent does not abandon the task in neither of $u_0, u_1, \ldots, u_{d-1}$, so for each $i\in [d]$ we have \linebreak
            $w(e_i)+\beta\cdot \dist_{G'}(u_i,t)\le  \beta\cdot r.$
            This is equivalent to \ref{prop:does-not-abandon}.
            \item For each $i\in [d]$, the agent always prefers the edge $e_i$ when goes from $u_{i-1}$, meaning taking any other edge $e'$ outgoing from $u_{i-1}$ gives greater cost estimations (or estimations for $e'$ and $e_i$ are equal but $e'$ goes after $e_i$ in the ordering of edges).
            This is equivalent to \ref{prop:no-turn-aways}.
        \end{itemize}
        The equivalences prove the claim.
    \end{claimproof}

    We now explain how the algorithm finds suitable $X$ algorithmically.

    \medskip\noindent\textbf{Obstructions between critical vertices.}
    The algorithm first constructs $X_1$ by finding the arcs with critical endpoints that have to be deleted necessarily to fulfill the properties.
    $X_1$ consists of all arcs $e'$ of $G$ such that $e'$ starts in some $x\in S$ and ends in some $y\in S$, and either
        \begin{enumerate}[label=(\roman*), leftmargin=1.8em]
            \item $\dist_{R}(x,t)>w(e')+\dist_R(y,t)$, or
            \item $x=u_{i-1}$ for some $i\in [d]$, $e'\neq e_i$ and
            $w(e_i)+\beta\cdot \dist_R(u_i, t)\succ w(e')+\beta\cdot \dist_R(y, t).$
        \end{enumerate}

    We claim that all arcs in $X_1$ are necessary to delete.

    \begin{claim}\label{claim:x1-subset-of-x}
        If $X$ satisfies \ref{prop:distances-implemented}, \ref{prop:no-turn-aways} then $X_1\subset X$.
    \end{claim}
    \begin{claimproof}
        Let $X\subset E(G)$ satisfy \ref{prop:distances-implemented}, \ref{prop:no-turn-aways}.
        Targeting towards a contradiction, suppose $e'\in X_1\setminus X$.
        Denote by $x,y\in S$ the start and end points of $e'$, i.e.\ $e'$ goes from $x\in S$ to $y\in S$.
        
        If (i) is true, then $\dist_R(x,t)>w(e')+\dist_R(y,t)$, then by \ref{prop:distances-implemented} we have $\dist_{G'}(x,t)>w(e')+\dist_{G'}(y,t)$, which contradicts the definition of shortest distances, as $e'$ goes from $x$ to $y$.

        If (ii) is true, then by \ref{prop:distances-implemented} we have
        $$w(e_i)+\beta\cdot \dist_{G'}(u_i,t)\succ w(e')+\beta\cdot \dist_{G'}(y,t),$$
        and this clearly violates \ref{prop:no-turn-aways}, since both $e_i$ and $e'$ start in $u_{i-1}$.
        The obtained contradiction finishes the proof of the claim.
    \end{claimproof}

    We will show a little later that in any optimal solution, there is no arc between vertices in $S$ that should be deleted but does not belong to $X_1$.

    \medskip
    \noindent\textbf{Obstructions incident to non-critical vertices.}
    Then, for each non-critical vertex $v\in V(G)\setminus S$, the algorithm constructs an arc pair set $\mathcal{X}_v$.
    This is done in the following way.
    Consider an arc $e'$ that ends in $v$ and an arc $e''$ that starts in $v$.
    Let $x$ and $y$ be the start point of $e'$ and the end point of $e''$ respectively.
    Note that $x,y\in S$, since we cannot have arcs between two non-critical vertices.
    The pair $\{e',e''\}$ is added to $\mathcal{X}_v$ if
    \begin{enumerate}[label=(\roman*), leftmargin=1.8em]
        \item $\dist_R(x,t)>w(e')+w(e'')+\dist_R(y,t)$, or
        \item $x=u_{i-1}$ for some $i\in [d]$ and
    $w(e_i)+ \beta \dist_R(u_i, t)\succ w(e')+\beta (w(e'')+ \dist_R(y, t)).$
    \end{enumerate}

    We show that its necessary for $X$ to hit every pair in $\mathcal{X}_v$.
    \begin{claim}\label{claim:x-hits-pairs}
        Let $X\subset E(G)$ satisfy \ref{prop:distances-implemented}, \ref{prop:no-turn-aways}.
        Then, for each $v\in V(G)\setminus S$ and for each $Y\in \mathcal{X}_v$, we have $X\cap Y\neq \emptyset.$
    \end{claim}
    \begin{claimproof}
        The proof resembles the proof of \Cref{claim:x1-subset-of-x} with a pair of edges $e',e''$ instead of a single edge $e'$.
        Targeting towards a contradiction, assume there exists $v\in V(G)\setminus S$ and $\{e',e''\}\in \mathcal{X}_v$ such that $e',e''\notin X$.

        If $\{e',e''\}$ satisfies (i), then by \ref{prop:distances-implemented} $$\dist_{G'}(x,t)\succ w(e')+w(e'')+\dist_{G'}(y,t),$$ which contradicts the definition shortest distances identically to the proof of \Cref{claim:x1-subset-of-x}.

        The remaining case is when $\{e',e''\}$ satisfies (ii).
        By \ref{prop:distances-implemented} and $u_i, y\in S$ we have $\dist_R(y,t)=\dist_{G'}(y,t)$ and $\dist_R(u_i,t)=\dist_{G'}(u_i,t).$
        Since $e''$ goes from $v$ to $y$, we have $\dist_{G'}(v,t)\le w(e'')+ \dist_{G'}(y,t)$ by definition of $\dist$.
        Combining all of these with (ii) yields
        $$w(e_i)+\beta\cdot \dist_{G'}(u_i,t)\succ w(e')+\beta\cdot \dist_{G'}(v,t),$$
        contradiciting \ref{prop:no-turn-aways}.
        The proof of the claim is complete.
    \end{claimproof}

    \noindent\textbf{Inner step of the algorithm.}
    Having $P$ and $R$ fixed by the two outer loops, and $X_1$ and $\mathcal{X}_v$ constructed for each $v\notin S$, the algorithm finds the set $X$ of minimum cardinality that satisfies \ref{prop:no-delete-r} and the right parts \Cref{claim:x1-subset-of-x} and \Cref{claim:x-hits-pairs}, that is, $X$ does not intersect $R$,
    $X$ contains $X_1$, and $X$ intersects every pair in $\bigcup_{v\notin S} \mathcal{X}_v$.
    
    The algorithm might also determine that $X$ does not exist (this would mean basically that the choice of $P$ and $R$ is wrong, or no solution exists at all).
    This happens, for instance, when $X_1\cap R\neq \emptyset$.

    Since $\mathcal{X}_v$ and $\mathcal{X}_{v'}$ contains pairwise distinct arcs, the minimum possible set intersecting every pair of $\mathcal{X}_v$ can be done independently for each vertex $v \notin S$.
    For each $v\notin S$, the algorithm uses polynomial-time subroutine \Cref{prop:bipartite-vertex-cover} to find minimum-size set $X_v$ that hits every pair in $\mathcal{X}_v$ and does not intersect $R$.
    If the subroutine of \Cref{prop:bipartite-vertex-cover} reports that $X_v$ does not exist, the algorithm correctly determines that $X$ does not exist and moves to the next iteration of the outer loops.

    If $X_v$ was successfully constructed for each $v\notin S$, the algorithm constructs $X$ via $X:=X_1\cup\bigcup_{v\in V(G)\setminus S} X_v$.
    The algorithm obtains a solution of size $|R\cap A|+|X|$: edges in $(R\cap A)$ should be added to $G$, edges in $X$ should be removed from $G$.
    The algorithm updates its best solution with this one and moves to the next choice of $P$ and $R$.

    The description of the algorithm is finished.
    We move on to proving its correctness.

 \medskip   \noindent\textbf{The algorithm never outputs incorrect solutions.}
    We first show that $(R\cap A)\cup X$ is always a correct solution.
    Note that if we add  $R\cap A$ to $G$ and remove $X$ from $G$, we obtain exactly $G'=(V(G), R\cup (E(G)\setminus X)$.
    Therefore, it is enough to prove that the agent always follows $P$ in $G'$.
    
    We show that $X$ constructed in the inner step of the algorithm satisfies \ref{prop:distances-implemented} and \ref{prop:no-turn-aways}.

    \begin{restatable}{claim}{clai}\label{claim:x-satisfies}
        If $X$ satisfies \ref{prop:no-delete-r}, $X$ contains $X_1$ and $X$ intersects every pair of  $\bigcup_{v\notin S}\mathcal{X}_v$, then $X$ satisfies \ref{prop:distances-implemented} and \ref{prop:no-turn-aways}.
    \end{restatable}
        \begin{claimproof}
        The proof is by contradiction.
        We first prove \ref{prop:distances-implemented} and then prove \ref{prop:no-turn-aways} using that $X$ satisfies \ref{prop:distances-implemented}.
        
        \textbf{$X$ satisfies \ref{prop:distances-implemented}}. Assume that $X$ does not satisfy \ref{prop:distances-implemented} (the distances in $G'$ does not agree with $R$).
        This means that there exists $x\in S$ with $\dist_{G'}(x,t)\neq \dist_R(x,t)$.
        Among all such possible choices of $x$, choose $x$ topologically as close to $t$ as possible.
        Since $R\subset E(G')$, every path with arcs from $R$ is present in $G'$.
        Consequently, we have that $\dist_{G'}(x,t)<\dist_R(x,t).$
        Note that the path from $x$ to $t$ has at least one arc since $\dist_{G'}(t,t)=\dist_R(t,t)=0$.
        
        Let $v$ be the first vertex after $x$ on the shortest $x$-$t$ path in $G'$ , and let $e'$ be the first arc on this path  ($e'$ goes from $x$ to $v$).
        If $v$ is critical, that is, $v\in S$, then $\dist_R(v,t)=\dist_{G'}(v,t)$ by definition of $x$.
        Then we have
        \[
             \dist_R(x,t)>\dist_{G'}(x,t) =w(e')+\dist_{G'}(v,t) =w(e')+\dist_{R}(v,t).
        \]
        That is, $e'$ satisfies (i) in construction of $X_1$, so $e'\in X_1\subset X$.
        But $e'\in E(G')=R\cup (E(G)\setminus X)$, so $e'\in R$.
        This contradicts \ref{prop:no-delete-r}.

        When $v\notin S$, we have that there is a second arc $e''$ and a third vertex $y$ on the shortest $x$-$t$ path in $G'$.
        The vertex $y$ belongs to $S$, because there are no arcs between vertices outside $S$.
        Hence, $\dist_R(y,t)=\dist_{G'}(y,t)$.
        Proceeding similarly to as before in this proof, we have
        $$\dist_R(x,t)>w(e')+w(e'')+\dist_R(y,t),$$
        meaning that $\{e',e''\}\in \mathcal{X}_v$, so at least one of $e',e''$ belongs to $X$.
        But both these edges belong to $E(G')=R\cup (E(G)\setminus X)$, meaning that $R\supset (X\cap \{e',e''\})$, contradicting (X1) as before.

        \textbf{$X$ satisfies \ref{prop:no-turn-aways}}.
        Assume that $X$ does not satisfy \ref{prop:no-turn-aways} (the agent turns away from $P$ in $G'$).
        That is, there exists $i\in [d]$ and an arc $e'\in E(G')$ going from $u_{i-1}$ to some $v\in V(G)$ with $e'\neq e_i$ and
        \begin{equation}\label{eq:temp1}
        w(e_i)+\beta\cdot \dist_{R}(u_i,t)\succ w(e')+\beta\cdot \dist_{G'}(v,t),
        \end{equation}
        where $\dist_{R}(u_i,t)=\dist_{G'}(u_i,t)$ follows from $u_i\in S$ and $X$ satisfying \ref{prop:distances-implemented}.
        
        If $v\in S$, then we can as well rewrite $\dist_{G'}(y,t)$ with $\dist_R(y,t)$ in \eqref{eq:temp1}.
        Hence, $e'$ satisfies (ii) in the construction of $X_1$.
        Then $e'\in X_1\subset X$.
        This contradicts \ref{prop:no-delete-r} exactly as before in the proof of this claim.

        The remaining case is $v\notin S$.
        The shortest $v$-$t$ path in $G'$ has at least one arc vertex, let $e''$ be the first arc on this path and let $y$ be the second vertex on this path ($e''$ goes from $v$ to $y$). 
        We have that $y\in S$.
        Then 
        $\dist_{G'}(v,t)=w(e'')+\dist_{G'}(y,t)=w(e'')+\dist_R(y,t).$
        Combining this with \eqref{eq:temp1} gives that $e',e''$ satisfy (ii) in the construction of $\mathcal{X}_v$.
        Consequently, $\{e',e''\}\cap X\neq \emptyset$, which contradicts \ref{prop:no-delete-r}.

        The proof of the claim is complete.
    \end{claimproof}

    Combining \Cref{claim:x-satisfies} and \Cref{claim:equiv}, we obtain that the agent follows $P$ in $G'$, and $P$ traverses all arcs $T$ by construction.
    Therefore, the algorithm outputs only correct solutions.
    
\medskip   \noindent\textbf{The algorithm finds a minimum-size solution.}
To see that the algorithm outputs a solution of minimum size, we show a valid choice of $P$ and $R$ that leads to the optimal answer.

To see this choice, let $G^*$ be the graph obtained from $G$ with the optimal edit sequence.
Let $P^*$ be the agent's $s$-$t$ path in $G^*$.
We have that $T\subset P^*$, and $|P^*|\le 2\vc(G+A)-1\le 2|C|-1$.
Let $S$ be the union of all vertices of $P^*$ and $C$.
For each $v\in S$, let $R^*_v$ be the edge set of the shortest path between $v$ and $t$ in $G^*$.
Construct $R^*$ as a union of $P^*$ and $\bigcup_{v\in S}R^*_v$, clearly $$|R^*|\le (2|C|-1)+|S|\cdot (2|C|-1)\le (|S|+1)\cdot (2|C|-1),$$
and $R^*$ satisfies \ref{prop:does-not-abandon} since the agent does not abandon $P^*$ in $G^*$.

Therefore, $P^*$ and $R^*$ are valid choices of $P$ and $R$ that the algorithm will consider in one of its iterations.
Note that $X^*$ (a sequence of arc deletions for $G^*$) satisfies \ref{prop:no-delete-r},\ref{prop:distances-implemented}, \ref{prop:no-turn-aways} for this choice of $P$ and $R$.
Hence, the algorithm will find edge deletion set $X$ with $|X|=|X^*|$ and update the answer with an arc edit set of size $|R^*\cap A|+|X^*|$, which is minimum possible by definition of $G^*$.

\medskip\noindent\textbf{Running time analysis.}
The inner step of the algorithm is polynomial in $m$, so we have to analyze the number of iteration given by its two loops.
In its first loop, the algorithm iterates arc sets of size at most $\mathcal{O}(\vc)$, and in its second loop it iterates arc sets of size $\mathcal{O}(\vc^2)$.
Multiplied, this gives a total of $m^{\Oh(\vc^2)}$ iterations.
\end{proof}

\section{Parameter Landscape}

The proof of our main algorithmic result, \Cref{thm:main-result}, shares its basic idea with the proof of \Cref{thm:vertex-cover}: guess each distance from $v$ to $t$, ensure that necessary arcs are added, and delete necessary arcs so that the resulting distances agree with the guess and the resulting agent's path traverses each arc in $T$.
To achieve the running time of \Cref{thm:main-result}, we avoid guessing the agent's path $P$, as we did in \Cref{thm:vertex-cover}, because this approach requires comparing arbitrary arcs of $G+A$ lexicographically.
In the proof of \Cref{thm:main-result}, we highly rely on that it's enough to compare arcs of $G+A$ only to arcs from $T$.

Other than that, \Cref{thm:main-result} is a technical dynamic programming algorithm over a tree decomposition, common to the field of parameterized complexity (see, e.g, \cite{CyganFKLMPPS15}).
The complete proof of \Cref{thm:main-result} can be found in the appendix~\ref{appendix}.

We now demonstrate that \Cref{thm:main-result} captures several structural settings for \tpathedit, including the bounded vertex cover scenario and constant arc costs.

\begin{restatable}{lemma}{lemm}\label{lem:corollaries}
    Let $(G,w)$ be a DAG with non-negative integer arc costs $w: E(G)\to W$.
    Let $t$ be a vertex in $G$.
    Let $p$ be a maximum number of arcs in a path in $G$.
    Then $$|L|\le \min\{(p+1)^{|W|},1+p\cdot \max W\} \text{ and}$$
    $$|L|\le \min\{m^{p+1}, m^{2\cdot\vc},m^{2^{{\td+1}}}, m^{2\cdot \fvs+1}\}.$$
\end{restatable}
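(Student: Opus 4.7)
The plan is to establish each of the bounds in turn, noting that they fall into two groups: the first two come from analyzing the multiset of arc weights along a single path, while the remaining four bound the total number of $(v,t)$-paths, mostly by controlling the path length $p$ (the feedback vertex set case being the exception).

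For the first group I would start with $|L|\le (p+1)^{|W|}$: any $(v,t)$-path $P$ has at most $p$ arcs, so $w(P)$ depends only on the multiset that records, for each weight $c\in W$, the number of arcs of weight $c$ on $P$, and each such count lies in $\{0,1,\dots,p\}$. The bound $|L|\le 1+p\cdot \max W$ then follows because each such $w(P)$ is a non-negative integer bounded by $p\cdot\max W$.

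For the second group the basic inequality is $|L|\le m^{p+1}$, which follows by encoding every $(v,t)$-path of length at most $p$ by its sequence of arcs: there are at most $\sum_{k=0}^{p} m^k \le m^{p+1}$ such sequences (the degenerate case $m\le 1$ is handled directly), and each $(v,t)$-path contributes at most one value to $|L|$. Substituting upper bounds on $p$ then yields two further inequalities: $|L|\le m^{2\vc}$ follows from \Cref{obs:p-by-vc}, which gives $p\le 2\vc-1$; and $|L|\le m^{2^{\td+1}}$ follows from the standard tree-depth estimate $\td(P_\ell)=\lceil \log_2(\ell+1)\rceil$, applied to a longest path of $G$, yielding $p+1\le 2^{\td}-1 \le 2^{\td+1}$.

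The main obstacle is the feedback vertex set bound $|L|\le m^{2\fvs+1}$: in a DAG the path length $p$ cannot be bounded in terms of $\fvs$ alone, since a long directed path has $\fvs=0$. The approach I would take is to fix a feedback vertex set $F$ of size $\fvs$ in the undirected skeleton of $G$ and, for each $(v,t)$-path $P$, consider the ordered sequence $(f_1,\dots,f_k)$ of vertices of $F$ visited by $P$. Because $G-F$ is a forest, between any two of its vertices there is at most one simple path in the undirected skeleton, hence at most one directed path in $G-F$, so once $v$ and $(f_1,\dots,f_k)$ are fixed, $P$ is uniquely determined. Counting ordered subsequences of $F$ gives at most $\sum_{k=0}^{\fvs} \fvs!/(\fvs-k)! \le (\fvs+1)!$ paths per starting vertex $v$; summing over the at most $m+1$ vertices that can reach $t$ and using $\fvs!\le m^{\fvs}$ (since $\fvs\le n\le m+1$ in the relevant subgraph induced by ancestors of $t$) then yields the claimed $|L|\le m^{2\fvs+1}$ after a short calculation.
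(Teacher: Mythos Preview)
Your arguments for the first five bounds are correct and track the paper's proof closely (you count arc sequences where the paper counts arc subsets for $m^{p+1}$, and you invoke the exact formula $\td(P_\ell)=\lceil\log_2(\ell+1)\rceil$ where the paper simply cites the bound $p\le 2^{\td+1}-2$, but these are cosmetic differences).

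The feedback-vertex-set argument, however, has a genuine gap. You claim that once the start $v$ and the ordered sequence $(f_1,\dots,f_k)$ of $F$-vertices along a $(v,t)$-path $P$ are fixed, $P$ is uniquely determined because $G-F$ is a forest. But forest-uniqueness only guarantees a unique path between two vertices \emph{of} $G-F$, whereas the segments of $P$ you need to reconstruct have their endpoints in $F$, not in the forest. Concretely, take $F=\{f\}$ with arcs $v\to a_i$ and $a_i\to f$ for $i=1,2,3$, and $f\to t$; then $G-F$ is a star plus an isolated vertex (so $\fvs=1$), yet there are three $(v,t)$-paths, all with $F$-sequence $(f)$, already exceeding your per-vertex bound of $(\fvs+1)!=2$. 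The paper's approach instead records, for each $f\in F$, the (at most two) arcs of $P$ incident to $f$: this is at most $1+\binom{m}{2}$ options per $F$-vertex, and together with the $m$ choices for the starting arc yields at most $m\cdot\bigl(1+\binom{m}{2}\bigr)^{\fvs}\le m^{2\fvs+1}$ paths in total. Those arc choices pin down the forest entry and exit point of every segment, after which uniqueness of paths in $G-F$ does finish the reconstruction---so your idea is close, but you must encode the incident arcs, not just the visited $F$-vertices.
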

 \begin{proof}
    To see the first part of the lemma, note that any path in $(G,w)$ consists of at most $p$ arcs.
    Hence, a cost of a path in $(G,w)$ is a sum of $p$ integers taken from $W$.
    The cost of a path $P$ can be determined by the number of arcs of cost $w$ it contains, for each $w\in W$ (at most $(p+1)^{|W|}$ combinations).
    This sum is an integer that can be as low as $0$ (empty path) or as high as $p\cdot \max W$ ($p$ arcs with maximum possible cost). $|L|\le \min\{(p+1)\cdot W, 1+p\cdot \max W\}$ follows.

    To see the second part, note that any path $P$ in $G$ is a set of at most $p$ arcs, and the number of such sets is at most $\sum_{i=0}^{p}\binom{m}{i}\le m^{p+1}$.
    For vertex cover, we have that $p\le 2\cdot \vc-1$ (\Cref{obs:p-by-vc}).
    For tree-depth, we have that $p\le 2^{\td+1}-2$ (see, e.g., \cite{Demaine2019}).

    A graph with bounded feedback vertex set size can have paths of unbounded lengths.
    But any $(v,t)$-path in $G$ is uniquely defined by its starting arc ($m$ options), and $0$ or $2$ arcs per a vertex of the feedback vertex set ($(1+\binom{m}{2})^{\fvs}$ options), forming a total of at most $m^{2\fvs+1}$ paths.
\end{proof}

\Cref{lem:corollaries} has interesting consequences.
Because each of the three parameters $\vc$, $\fvs$, $\td$ is at least $\tw$, the running time of the algorithm of \Cref{thm:main-result} is upper-bounded by $m^{\Oh(\vc^2)}$, $m^{\Oh(\fvs^2)}$, $m^{\Oh(\td\cdot 2^{\td})}$.
That is, in particular, \Cref{thm:main-result} generalizes \Cref{thm:vertex-cover}.

For constant number of distinct arc costs, the algorithm of \Cref{thm:main-result} is more efficient with running time $p^{\Oh(\tw\cdot |W|)}\cdot m^{\Oh(1)}$.
In particular, \tpathedit is fixed-parameter tractable with respect to both $\fvs+|W|$ and $\td+|W|$.

\section{Hardness of \tpathedit}

In this section we prove hardness of \tpathedit, showing that our \classXP-algorithms cannot be turned into \classFPT-algorithms without breaking $\classFPT\neq \classW{1}$.

\begin{theorem} \label{W[1]-hardness}
\tpathedit is \classW{1}-hard with respect to:
\begin{itemize}
    \item $n$, when parallel arcs in $G$ are allowed,
    \item $p+ \vc$, when parallel arcs in $G$ are forbidden.
\end{itemize}
\end{theorem}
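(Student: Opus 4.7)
The plan is to reduce from the canonical $\classW{1}$-hard problem \kSum: given $N$ positive integers $a_1, \ldots, a_N$ and a target $S$, decide whether some $k$ of them sum to exactly $S$. A single construction, followed by a subdivision step for the second case, will cover both parts of the theorem.

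For the first case (parameter $n$, parallel arcs allowed), I would build a DAG with $O(k)$ main vertices arranged as a chain $s = v_0, v_1, \ldots, v_k = t$, plus a constant number of auxiliary vertices used to anchor critical arcs. Between consecutive main vertices $v_{i-1}$ and $v_i$, place $N$ parallel arcs whose weights encode the integers $a_1, \ldots, a_N$ (suitably scaled). An $s$-$t$ path then corresponds to choosing one integer per level, with total arc weight equal to the sum of chosen integers. By tuning the reward $r$, the present-bias $\beta$, and a designated critical arc set $T$, one arranges that the agent follows a $T$-traversing path without abandonment iff the chosen integers sum to exactly $S$: overfull subsets cause abandonment via the threshold $\beta r$, while underfull ones cannot realise a $T$-path without strictly more edits. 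The minimum edit count then detects existence of a $k$-subset with sum $S$, while the vertex count remains $O(k)$.

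For the second case (parameters $p$ and $\vc$, no parallel arcs), I would subdivide each parallel arc in the previous construction by a fresh auxiliary vertex, turning it into a length-$2$ path. The resulting graph has no parallel arcs, the main vertices $\{v_0, v_1, \ldots, v_k\}$ form a vertex cover of size $O(k)$ since every remaining edge has at least one main-vertex endpoint, and the longest $s$-$t$ path has $2k = O(k)$ arcs. Hence $p + \vc = O(k)$, matching the parameter of the source \kSum instance. The weights on the two arcs of each subdivided gadget are chosen so that the agent's perceived costs and greedy decisions agree with the original construction.

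The main obstacle I anticipate is calibrating the weights, reward, $\beta$, and critical arcs so that the agent's local greedy behaviour faithfully encodes the global subset-sum constraint, despite the distortion caused by present-bias discounting and the lexicographic tie-breaking in the Kleinberg--Oren model. Weights will likely need to be scaled by a large constant, with small tie-breaker offsets in the low-order bits, to keep perceived-cost comparisons between competing path prefixes aligned with the comparison between sums of chosen integers. Verifying the equivalence between a minimum-cost edit sequence and a valid $k$-subset solution --- and in particular ruling out unintended agent deviations that might cheaply realise a $T$-path without encoding a genuine \kSum solution --- will require a careful case analysis of every way the agent can deviate from the intended chain route.
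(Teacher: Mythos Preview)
Your high-level plan matches the paper's proof almost exactly: reduce from a \kSum variant, build a chain of $O(k)$ vertices with bundles of parallel arcs encoding the numbers, then subdivide each arc to handle the simple-graph case while keeping $p+\vc=O(k)$. Two points are worth flagging.

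First, the paper reduces from \emph{Modified} (i.e.\ partitioned) \kSum, where the input is $k$ sets $X_1,\ldots,X_k$ and one must pick one element from each. Your formulation---a single pool of $N$ integers replicated between every consecutive pair $v_{i-1},v_i$---lets a path reuse the same integer at several levels, so it does not faithfully model ``some $k$ of them'' in the usual sense. Switching to the partitioned variant (placing the arcs for $X_i$ only between $v_i$ and $v_{i+1}$) fixes this at no cost and is exactly what the paper does.

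Second, your proposed mechanism for enforcing the \emph{exact} target---abandonment for overfull sums, an edit-count argument for underfull ones---is not how the paper's gadget works and is where your sketch is thinnest. The paper instead introduces a small constant number of auxiliary vertices (call them $a,b,c$) and two competing deviation paths: at $s$ the agent compares the $T$-path $s\,a\,b\,t$ against a duplicated bypass $s\,c\,t$, and at $a$ it compares going to $b$ against entering the $v_1$-chain. With $\beta=\tfrac14+\varepsilon$ and carefully chosen constants, the agent takes the $T$-arc $(b,t)$ iff the \emph{shortest} remaining $v_1$--$t$ path has cost exactly $S$; too large and the agent prefers the $c$-bypass from $s$, too small and the agent detours into the chain from $a$. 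The edit budget is then tied to the number of parallel arcs one must delete to leave exactly one arc per level (so the chain's shortest path is the chosen sum), with the $c$-bypass duplicated enough times that cutting it is never optimal. This two-sided ``wrong turn'' mechanism is the part you correctly identify as the main obstacle, and it is indeed where all the calibration lives; your abandonment-based upper bound would need its own nontrivial argument because the first arc is undiscounted, so the perceived cost is not simply a function of the subset sum.
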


\begin{proof}
We give a parameterized gap reduction from \textsc{Modified $k$-Sum} (\textsc{M$k$S}), which is \classW{1}-hard \cite{AAAI22_DFI}.
In \textsc{M$k$S}, we are given $k$ integer sets $X_1, \ldots, X_k$ and the target sum $S$.
The goal is to determine whether there exists a choice of $k$ integers, one from each of $X_1, X_2,\ldots, X_k$, with a total sum $S$.

Given an instance of $(X_1, \ldots, X_k,S)$ of \textsc{M$k$S}, we construct a corresponding \tpathedit instance as illustrated in~\Cref{fig:Reduction}. For each set $X_i$, we create a gadget consisting of two vertices $v_i$ and $v_{i+1}$ connected by multiple parallel arcs, each representing an element $x \in X_i$ with weight $w = x$. The goal is to ensure that the agent reaches vertex $t$ via the unique $T$-path $sabt$ including the critical arc $(b, t)$.

The construction uses the following parameters:
     $\beta = \frac{1}{4} + \varepsilon$ for arbitrary $0 < \varepsilon < 1$;
     $r = 10 \cdot S$;
     $A = \emptyset$;
     $T = \{(b, t)\}$;
     $x = S + 2$.
We duplicate each of the arcs of $(s, c)$, $(c, t)$ into $z := |X_1| + \ldots + |X_k|$ identical copies with the same arc cost. Breaking the path $s\to c \to t$ in $G$  uses at least $z$ edits.

We will show that the answer to the \textsc{Modified $k$-Sum} problem is ``Yes'' if and only if there is a solution for the \tpathedit problem of size no greater than $z-k$.

Suppose that the \tpathedit instance has a feasible solution of size at most $z-k$. Then, the only arcs removed are those between vertices $v_1, \ldots, v_k, t$ in gadgets; removing any arc from the lower part of the graph would be too costly or would block the required edge $(b, t)$. Let $G'$ denote the modified graph, and let $S'$ be the cost of the shortest $v_1$-$t$ path in $G'$.

We now analyze possible values of $S'$. If $S' > x - 2$, then $S' \geq x - 1$, and the perceived cost of any path starting with $s\to a\to \ldots$ is at least $\beta \cdot (1 + x-1)$, which is worse than the perceived cost of the lower path $s\to c\to t$, that is equal to $\beta \cdot (x - \varepsilon)$. Hence, the agent would not take the $s\to a\to b\to t$ path.
If $S' < x - 2$, then $S' \leq x - 3$, and, when he makes decision in the vertex $a$, the agent compares
  \[
  1 + \beta \cdot S' < 0 + \beta \cdot (x + 1),
  \]
  and prefers to go through $v_1$ rather than directly to $b$.

Therefore, the only way for the agent to traverse the desired arc $(b, t)$ is when $S' = S = x - 2$. This implies the existence of a sequence of arc weights corresponding to a valid solution of the original \textsc{Modified $k$-Sum} instance. Conversely, any such solution induces a deletion pattern in \tpathedit with no more than $z-k$ edits, that forces the agent to pass through the arc $(b, t)$ as desired.

At the same time, as is evident from the proof, the only solution of another type for such \tpathedit instance---to break the path $s\to c \to t$---will have a size at least $z$. Which completes our reduction.

\begin{figure}[ht]
 \center{\includegraphics[scale=0.30]{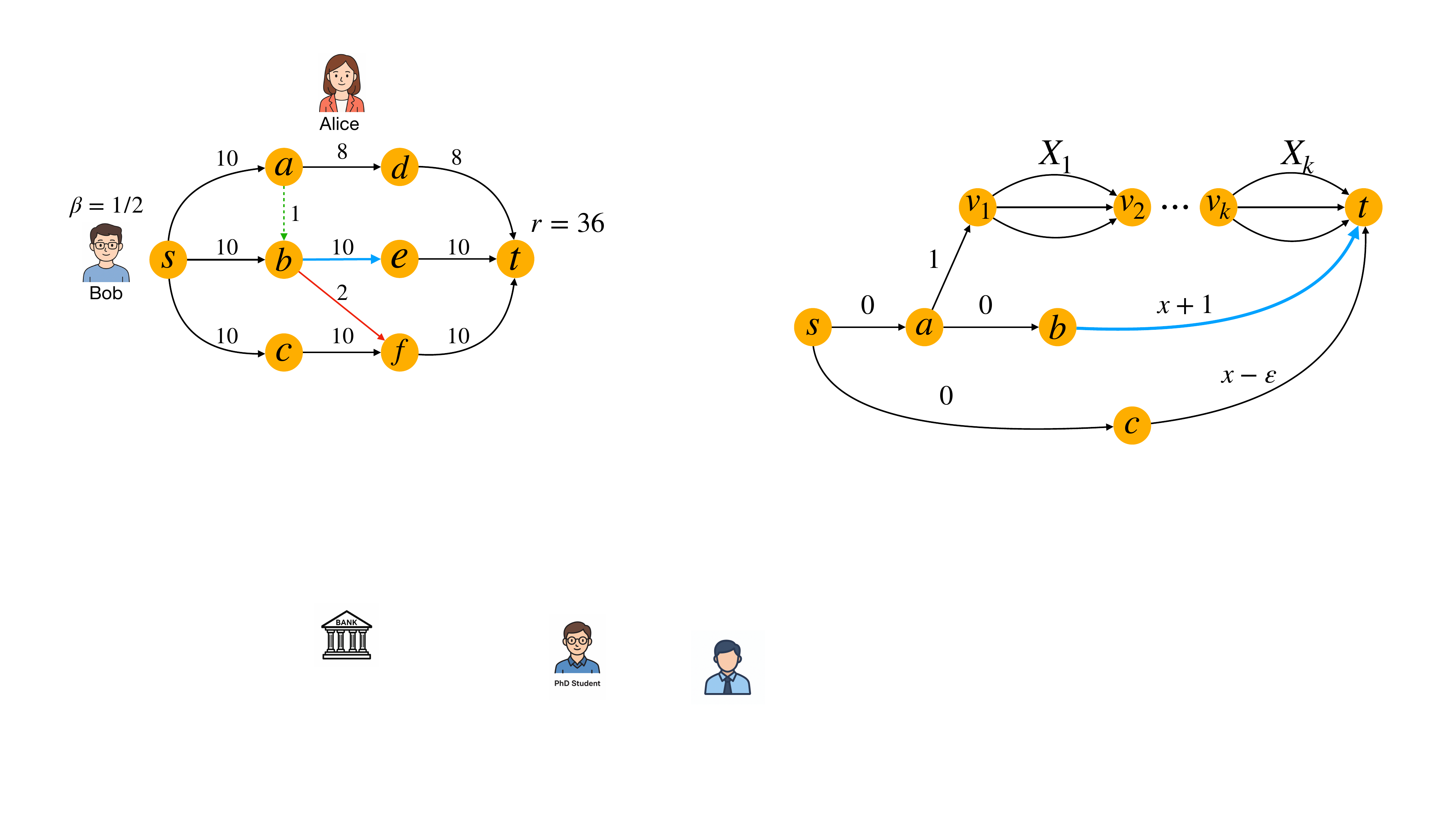}}
 \caption{The construction of graph $G$ from \Cref{W[1]-hardness}.}
 \label{fig:Reduction}
\end{figure}

Note that the construction of the graph $G$ involves only $k+5$  vertices. Int the same time, this example can be easily modified into an example without parallel arcs by adding an additional vertex in the middle of each arc. 
In such a graph, the vertex cover number will not exceed $k+5$, and the parameter $p$, the maximal arc length of a path in $G$, is equal to $2k+2$. Thus, we have obtained a parameterized reduction to \tpathedit with parameter $n = k+5$ in one case, and $p+\vc \leq 3k + 7$ in the other.

This completes the reduction, establishing \classW{1}-hardness for both settings.
\end{proof}

The graph in the reduction of \Cref{W[1]-hardness} is a series-parallel graph, and such graphs are known to have treewidth at most two \cite{graphclasses}.

\begin{corollary}
\tpathedit is \classNP-hard for $\tw=2$.
\end{corollary}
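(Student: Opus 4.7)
The plan is to reuse the polynomial-time many-one reduction built in the proof of \Cref{W[1]-hardness} and to verify that the graph it produces always has treewidth at most two. Since \textsc{Modified $k$-Sum} is NP-hard when $k$ is part of the input (it generalises \textsc{Subset Sum}), this immediately gives NP-hardness of \tpathedit restricted to instances with $\tw = 2$.

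The key step is to exhibit an explicit series-parallel decomposition of the graph $G$ from \Cref{W[1]-hardness}. Each gadget between consecutive vertices $v_i$ and $v_{i+1}$ is a parallel composition of single arcs labelled by the elements of $X_i$; the $k$ gadgets are then composed in series, with the result composed again in series with the critical arc $(b,t)$ and with the prefix arcs incident to $s,a$ to form the upper $s$-$t$ path. The bypass $s \to c \to t$, with its $z$ parallel copies of each arc, is itself a series-parallel $(s,t)$-graph, so placing it in parallel with the upper path yields a series-parallel $(s,t)$-graph in the usual two-terminal sense. Invoking the classical fact that every series-parallel graph has treewidth at most two~\cite{graphclasses} then bounds $\tw(G) \le 2$.

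For the parallel-arc-free variant obtained by subdividing each arc once, I would simply observe that a subdivision is a series composition with a fresh edge, so series-parallelism (and therefore the treewidth bound) is preserved. The only potential obstacle is purely bookkeeping: one must make sure that when the upper and lower parts are glued, the designated source and sink agree, so that the parallel composition is well-defined as a two-terminal operation; this is straightforward once the decomposition is written out. Combining the treewidth bound with the polynomial-time reduction of \Cref{W[1]-hardness} from (classically) NP-hard \textsc{Modified $k$-Sum} concludes the proof.
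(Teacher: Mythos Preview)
Your proposal is correct and follows exactly the paper's approach: the paper simply states that the graph from \Cref{W[1]-hardness} is series-parallel and invokes the classical treewidth-$2$ bound for such graphs~\cite{graphclasses}. You supply more detail than the paper does (an explicit decomposition, the observation that \textsc{Modified $k$-Sum} is classically \classNP-hard via \textsc{Subset Sum}, and the remark that subdivision preserves series-parallelism), though one small adjustment is needed: in the actual construction the gadget chain and the arc $(b,t)$ are on \emph{parallel} $a$--$t$ routes rather than in series, so the upper part is $(s,a)$ in series with the parallel composition of $(a\to b\to t)$ and $(a\to v_1\to\cdots\to t)$---still series-parallel, just assembled slightly differently than you describe.
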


Note that this result is tight: when $\tw (G+A) = 1$, the graph is a tree, and the \tpathedit problem becomes trivial. In this case, the agent has a unique $s$–$t$ path, and the decision reduces to checking whether agent follow this path and whether it contains all arcs in $T$, with no modifications required.

On the other hand, \cite{ECAI24_BelovaDFGI24} proved that the \tpathdel problem---a special case of our \tpathedit formulation---remains \classNP-hard even when the maximum length $p$ of any $s$–$t$ path in $G$ is bounded by $8$. This result immediately implies that \tpathedit is \classParaNP-hard with respect to the parameter $p$.

\section{Conclusion}

We have introduced and studied the \tpathedit problem, a general framework for designing structural interventions that guide time-inconsistent agents toward completing critical tasks. Our model extends the Kleinberg–Oren framework by formalizing the principal’s role in influencing agent behavior through minimal graph edits.

Our results delineate the tractability frontier of \tpathedit across a range of structural parameters, including treewidth, vertex cover, feedback vertex set, and path length. In particular, we demonstrate fixed-parameter tractability for combined parameters capturing graph structure and cost diversity, while also establishing lower bounds via \classParaNP- and \classW{1}-hardness.

An intriguing open question is to determine whether \tpathedit admits an $f(\tw) \cdot m^{g(p)}$-time algorithm for computable functions $f$ and $g$.

\bibliography{planning}
\bibliographystyle{apalike}

\appendix

\section{Appendix. The Main Result}\label{appendix}
In this section, we prove \Cref{thm:main-result}.
We require a notion of nice tree decompositions.
For comprehensive introduction to tree decompositions and dynamic programming over tree decompositions, we refer the reader to the book \cite{CyganFKLMPPS15}.

\begin{definition}[Nice tree decomposition]
    A \emph{nice tree decomposition} of a graph $G = (V, E)$ is a special form of a tree decomposition $(\mathcal{T}, \{(B_a, E_a)\}_{a \in V(\mathcal{T})})$ where:
    \begin{itemize}
        \item The tree $\mathcal{T}$ is rooted, it means that we distinguish one vertex $a_0$ of $\mathcal{T}$ which is the root of $\mathcal{T}$. Its bag is empty: $B_{a_0} = \emptyset$; and its set of edges is complete: $E_{a_0} = E(G)$.
        Each node $a$ is of one of the following five types:
        \begin{itemize}
            \item Leaf node: has no children and its bag and edges are empty: $B_a = E_a = \emptyset$.
            \item Introduce vertex node: has one child $b$ and its bag adds one vertex: $B_a = B_{b} \cup {v}$ for some $v \notin B_{b}$, and $E_a = E_b$.
            \item Forget node: has one child $b$ and its bag removes one vertex: $B_a = B_{b} \setminus {v}$ for some $v \in B_{b}$, and $E_a = E_b$.
            \item Join node: has two children $b_1$ and $b_2$ such that $B_a = B_{b_1} = B_{b_2}$ and $E_a = E_{b_1} \cup E_{b_2}.$
            \item Introduce edge node: has one child $b$, $B_a = B_b$ and there is an edge $uv \in E \setminus E_b$ such that $E_a = E_b \cup \{uv\}$.
        \end{itemize}
        \item The bags $B_a$ still satisfy the standard tree decomposition properties:
        \begin{itemize}
        \item For every vertex $v \in V(G)$, the set of nodes $\{a \in V(\mathcal{T}) \mid v \in B_a\}$ forms a connected subtree of $\mathcal{T}$.
        \item For every edge $(u,v) \in E(G)$, there exists a bag $B_a$ such that $u,v \in B_a$.
        \end{itemize}
    \end{itemize}
\end{definition}

The treewidth of a graph is the minimum possible width of its nice tree decomposition.

\begin{definition}
    The \emph{width} of a nice tree decomposition $(\mathcal{T}, \{(B_a, E_a)\}_{A \in \mathcal{T}})$ is defined as
\[
\max_{a \in \mathcal{T}} |B_a| - 1,
\]
that is, one less than the size of the largest bag.
\end{definition}

While computing the treewidth of a graph is \classNP-hard, we are able to find nice tree decompositions of small width efficiently.

\begin{proposition}[\cite{CyganFKLMPPS15}]
    Nice tree decomposition of width $\Oh(\tw)$ consisting of at most ${m}^{\mathcal{O}(1)}$ nodes can be computed in $2^{\mathcal{O}(\tw)}\cdot m^{\Oh(1)}$ time.
\end{proposition}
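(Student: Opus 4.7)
The statement has two components: the width guarantee $\Oh(\tw)$ and the running time $2^{\Oh(\tw)}\cdot m^{\Oh(1)}$. My plan is to obtain the decomposition in two phases: first produce an approximate tree decomposition of width $\Oh(\tw)$ in single-exponential FPT time, then convert it into a nice tree decomposition in polynomial time without blowing up the width. Each phase invokes a known result from the parameterized-algorithms literature, so the bulk of the work is wiring them together correctly and tracking the number of nodes produced.

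\textbf{Phase 1: obtaining an approximate tree decomposition.} I would invoke the constant-factor FPT approximation algorithm of Bodlaender, Drange, Dregi, Fomin, Lokshtanov, and Pilipczuk, which, given a graph $G$ on $n$ vertices and $m$ edges, either reports that $\tw(G)>k$ or outputs a tree decomposition of width at most $5k+4$ in time $2^{\Oh(k)}\cdot n$. Running it in a doubling search on $k=1,2,4,8,\ldots$ and stopping as soon as it produces a decomposition yields, in time $2^{\Oh(\tw)}\cdot m^{\Oh(1)}$, a tree decomposition $(\mathcal{T}',\{B'_a\}_{a\in V(\mathcal{T}')})$ of width $\Oh(\tw)$. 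A standard cleanup step (contracting edges of $\mathcal{T}'$ whose endpoints have identical bags and discarding bags contained in a neighbour's bag) ensures that the resulting tree has at most $n\le m+1$ nodes, so $|V(\mathcal{T}')|\le m^{\Oh(1)}$.

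\textbf{Phase 2: making the decomposition nice.} Starting from the decomposition produced in Phase 1, I would apply the textbook transformation (see Chapter 7 of the reference \cite{CyganFKLMPPS15}). Root $\mathcal{T}'$ arbitrarily; then process it top-down and replace each node $a$ and its edges to its children in $\mathcal{T}'$ by a chain of the five allowed node types. Concretely, for every parent-child pair $(a,b)$ insert a path of forget nodes to drop the vertices in $B'_a\setminus B'_b$ one by one, followed by a path of introduce-vertex nodes to add the vertices in $B'_b\setminus B'_a$ one by one; for every node with more than two children, split it via a cascade of join nodes duplicating the bag; replace every leaf by a path of introduce-vertex nodes ending at an empty leaf; and append a path of forget nodes above the root so that $B_{a_0}=\emptyset$. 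Finally, to install introduce-edge nodes, iterate over all $uv\in E(G)$: by the edge-covering property there is some bag containing both endpoints, and it is easy to locate (using the connected-subtree property) the topologically lowest node $a_{uv}$ whose bag contains both; insert an introduce-edge node for $uv$ directly above $a_{uv}$ (between $a_{uv}$ and its parent), and set $E_a=E_b$ elsewhere. Each vertex $v$ contributes at most one introduce and one forget node per bag it appears in, and each edge contributes exactly one introduce-edge node, so the total number of new nodes is $\Oh((|V(\mathcal{T}')|+m)\cdot \tw) = m^{\Oh(1)}$, and the width is unchanged.

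\textbf{Main obstacle.} The only nontrivial ingredient is Phase 1: computing the approximate decomposition in single-exponential FPT time is the hard theorem we are importing. Phase 2 is a purely combinatorial rewriting that runs in polynomial time and does not interact with $\tw$ except through the bag size. Once both phases are composed, the output is a nice tree decomposition of width $\Oh(\tw)$ with $m^{\Oh(1)}$ nodes, computed in total time $2^{\Oh(\tw)}\cdot m^{\Oh(1)}$, as required.
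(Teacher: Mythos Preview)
The paper does not supply its own proof of this proposition: it is stated with a citation to \cite{CyganFKLMPPS15} and used as a black box, so there is nothing to compare your argument against on the paper's side. Your two-phase outline (single-exponential FPT approximation of treewidth via Bodlaender et al., followed by the standard polynomial-time conversion to a nice tree decomposition with introduce-edge nodes) is exactly the construction that the cited reference presents, and the bookkeeping on node counts and running time is correct.
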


To make the agent traverse all arcs of $T$, we will force it to not follow any $T$-avoiding arcs.

\begin{definition}[$T$-avoiding arc]
We say that an arc $e$ going from $u$ to $v$ in $G+A$ is \emph{$T$-avoiding}, if there is an arc $x\to y$ in $T$, but all of the following is true: 
\begin{enumerate}
    \item $\{u,v\}\neq \{x,y\}$;
    \item $x$ is reachable from $u$ in $G+A$.
    \item $x$ is not reachable from $v$ in $G+A$.
\end{enumerate}
\end{definition}

Clearly, if an $s$-$t$ path does not contain any $T$-avoiding arc, then it traverses all arcs of $T$.

We are almost ready to prove \Cref{thm:main-result}.
Before giving the remaining auxiliary results and the proof itself, we first restate \Cref{thm:main-result} here for convenience. 

\mainResult*

We note that our algorithm makes all necessary arc removals and additions in order to force the agent with \emph{arbitrary} arc priorities to traverse all arcs of $T$.
That is, if an agent has two arcs with two identical perceived costs at a vertex, the agent might choose any of them, and we should make sure that \emph{any} path that the agent could follow traverses all arcs of $T$.

We require the explicit construction of $L$ in our algorithm.
We show that $L$ can be computed in polynomial time.

\begin{lemma}
    Given an instance of \tpathedit, $L$ can be computed in $(|L|+m)^{\Oh(1)}$ time.
\end{lemma}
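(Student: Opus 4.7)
The plan is to compute, for each vertex $v$ of $G+A$, the set $L_v = \{w(P) \mid P \text{ is a } (v,t)\text{-path in } G+A\}$ by dynamic programming in reverse topological order, and then obtain $L$ as the union $\bigcup_{v \in V(G)} L_v$. Since $G+A$ is acyclic by hypothesis, a topological ordering of its vertices can be computed in $\Oh(m+n)$ time as a preprocessing step.

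Next, I would initialize $L_t = \{0\}$ and $L_v = \emptyset$ for every other vertex. Processing vertices $v$ in reverse topological order, I would iterate over every outgoing arc $(v,u) \in E(G) \cup A$ and every value $c \in L_u$, inserting $w(v,u) + c$ into $L_v$ using a hash set so that duplicates are absorbed at constant expected cost. Correctness follows from the structural identity that every $(v,t)$-path with $v \neq t$ starts with some out-arc $(v,u)$ followed by a $(u,t)$-path, yielding the recurrence $L_v = \bigcup_{(v,u) \in E(G)\cup A} \{w(v,u) + c \mid c \in L_u\}$, with base case $L_t = \{0\}$ and $L_v = \emptyset$ when no path from $v$ to $t$ exists. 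Finally, one pass over all $L_v$ produces $L$ itself.

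The only point requiring care is bounding the total work, since a priori some $L_v$ could be enormous. The key observation that unlocks the running time is that $L_v \subseteq L$ for every $v$ by definition, and hence $|L_v| \le |L|$. Therefore the work spent at vertex $v$ is $\Oh(\mathrm{outdeg}(v) \cdot |L|)$ in expectation, and summing over all $v$ gives $\Oh(m \cdot |L|)$. Together with the negligible cost of the topological sort and the final union, this yields the desired $(|L| + m)^{\Oh(1)}$ bound. I expect no real obstacle here; the main conceptual step is simply recognizing that the subset inclusion $L_v \subseteq L$ prevents the DP table from blowing up at any individual vertex, which is what reduces what could look like an uncontrolled enumeration into a polynomial-in-$(|L|+m)$ algorithm.
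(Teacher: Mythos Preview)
Your proposal is correct and takes essentially the same approach as the paper: compute the sets $L_v$ by the recurrence $L_v=\bigcup_{(v,u)}\{w(v,u)+c\mid c\in L_u\}$ in reverse topological order, using the bound $|L_v|\le |L|$ to control the work. Your running-time analysis is in fact slightly tighter than the paper's (which quotes $m\cdot|L|^2$ per vertex), but both fall comfortably within the stated $(|L|+m)^{\Oh(1)}$ bound.
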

\begin{proof}
  We compute $L(v)$, equal to the set of all possible $v$-$t$ path costs for each $v\in V(G)$, by the following recursive relation:
  $$L(u)=\bigcup_{e\in E(G+A), e:u\to v} \{w(e)+\ell \mid \ell \in L(v)\},$$
  and $L(t)=\{0\}.$
  This can be computed by following a topological ordering of $G+A$, where $t$ is the first vertex, and $s$ is the last.

  For each vertex, the computation is done in at most $m\cdot |L|\cdot |L|$ running time.
  The total running time is polynomial in $m+|L|$.
\end{proof}

We are ready to prove the main result.

\begin{proof}[Proof of \Cref{thm:main-result}]
We describe an algorithm with the desired running time.

    We assume that we are given an instance of \tpathedit where $|L|>1$, otherwise $|L|=\{0\}$, that is, each arc cost is $0$, and the problem can be solved in  polynomial time.
    We also assume that every node is reachable from $s$ in $G+A$, and $t$ is reachable from every node in $G+A$.
    
    Then, the nice tree decomposition  $\mathcal{T}$ of $G+A$ of width $\Oh(\tw)$ is computed in $2^{\Oh(\tw)}\cdot {m^{\Oh(1)}}$ running time.
    We assume that the introduce edge node is present for each arc of $G+A$.
    That is, if there are $z$ parallel arcs between $u$ and $v$ in $G+A$, there are $z$ corresponding introduce edge nodes in $\mathcal{T}$.
    
    We additionally modify $\mathcal{T}$ by putting $s,t\in B_a$ into the bag of every node $a\in V(\mathcal{T})$, and the root node $a_0$ has its bag $B_{a_0}$ consisting of exactly $s$ and $t$, that is, $B_{a_0}=\{s,t\}$.
    We also force our nice tree decomposition to not introduce or forget $s$ or $t$.
    That is, the leaves of $\mathcal{T}$ also have $B_a=\{s,t\}$.
    The width of $\mathcal{T}$ remains bounded by $\Oh(\tw)$.

    Having our nice tree decomposition $\mathcal{T}$, we compute
    $$\operatorname{OPT}(a, D, d, R, F_0, F_1, f),$$
    where
    \begin{itemize}
        \item $a$ is a node of $\mathcal{T}$.

        \emph{The search space for arc edits for $\operatorname{OPT}(a,\ldots)$ is $E_a$.
        }
        \item $D$ is a mapping from $B_a$ to $L\cup \{+\infty\}$.
        
        \emph{Intuitively, $D(v)$ stores the distance from $v$ to $t$ in the solution graph.
        }
        \item $d$ is a mapping from $B_a$ to $\{0,1\}$.
        
        \emph{Intuitively, $d(v)=1$ if we are sure that the first arc of the shortest $v$-$t$ path belongs to the solution graph.}
        \item $R$ is a subset of $B_a$.
        
        \emph{The intuition behind $R$ (the set of reachable vertices) is that it consists of all vertices that the agent can reach in the solution graph.}
        \item $F_0, F_1$ are mappings from $R$ to $L$.

        \emph{These has the following meaning: if the agent is at $v\in R$ in the solution graph, then there is a $v$-$t$ path (with the first arc $v\to x$ for some $x\in V(G)$) in the solution graph that the agent can follow, with the first arc cost $F_0(v)-F_1(v)$, and distance from $x$ to $t$ equal to $F_1(v)$.}
        
        \item $f$ is a mapping from $R$ to $\{0,1\}$.

        \emph{This plays the similar role for $F_0, F_1$ that $d$ plays for $D$.
        Intuitively, $f(v)=1$ if we are sure that the arc outgoing of $v$ corresponding to $F_0(v), F_1(v)$, is present in the solution graph.}
    \end{itemize}
\newcommand{\rss}[0]{E'_a}

    Formally,
    $$\operatorname{OPT}(a, D, d, R, F_0, F_1, f)$$
    equals the minimum number of arc edits (or $+\infty$ if such sequence of edits does not exist) in the initial arc set $E_a\setminus A$ (where additions of arcs from $E_a\cap A$ is allowed), such that the resulting set $\rss\subset E_a$ satisfies the following eight conditions.
    Below we use $$\dist'(u,t)=\min_{v\in B_a\setminus \{u\}}\left\{\dist_{E'_a}(u,v)+D(v)\right\},$$
    if $d(u)=0$, and
    $$\dist'(u,t)=D(u),$$
    if $d(u)=1$.
    \begin{enumerate}
        \item\emph{(Distances are shortest).} 
        
        For each $u\in B_a$, $$D(u)\le\min_{v\in B_a\setminus \{u\}}\left\{\dist_{E'_a}(u,v)+D(v)\right\}.$$
        \item\emph{(Distances are real).} 
        
        For each $u \in B_a$, if $d(u)=0$, then 
        $$D(u)=\dist'(u,t).$$
        
        \item\emph{(Follow-arcs are followed by the agent).} 
        
        For each $u\in R$ with $u\neq t$, for each edge $e\in \rss$ outgoing of $u$ to some $v\in V(G)$,
        $$(F_0(u)-F_1(u))+\beta \cdot F_1(u)\le w(e)+\beta\cdot\dist'(v,t).$$

        \item\emph{($R$ is not abandoned by the agent).}
        
        For each $u \in R$ such that $u\neq t$, $$(F_0(u)-F_1(u))+\beta\cdot F_1(u)\le \beta\cdot r.$$

        \item \emph{(Follow-arcs are real).} 
        
        For each $u\in R$, such that $f(u)=0$, there exists $e\in 
    \rss$ going from $u$ to some $v\in V(G)$ such that
        $$F_1(u)=\dist'(v,t) \text{ and } F_0(u)=w(e)+F_1(u).$$
        
        \item\emph{($R$ is reached from $R$).} 
        
        For each $u\in R$, such that $f(u)=0$, there exists $v\in R$ and a $u$-$v$ path $P'\subset \rss$ with arcs $e_1, e_2, \ldots, e_p$ and vertices $u=u_0, u_1, \ldots, u_p=v$, such that for every $i \in [p]$,
        $$w(e_i)+\beta\cdot \dist'(u_i,t)\le w(e')+\beta\cdot\dist'(x,t),$$
        for every arc $e'\in\rss$ going from $u_{i-1}$ to some $x\in V(G)$, and
        $$w(e_i)+\beta\cdot\dist'(u_i,t)\le \beta\cdot r.$$

        \item\emph{(Any agent's path traverses $T$).} 
        
        For each $u, v\in R$ and for each $u$-$v$ path $P'\subset E'_a$ satisfying the two conditions of the previous point, for each $i\in[p]$, $e_i$ is not a $T$-avoiding arc. 

        \item\emph{(Agent can't follow an arc out of $R$).} 
        
        For each $u\in R$ and $v\notin R$, for each $u$-$v$ path $P'\subset \rss$, with arcs $e_1, e_2,\ldots, e_p$ and vertices $u=u_0, u_1, \ldots, u_p=v$, there exists $i\in[p]$ with
        $$w(e_i)+\beta\cdot \dist'(u_i,t)> w(e')+\beta\cdot\dist'(x,t),$$
         where $e'\in\rss$ is some arc going from $u_{i-1}$ to some $x\in V(G)$.
    \end{enumerate}
    
    If an optimal solution graph $G^*$ exists, and one of the agent's path in $G^*$ is $P$, that starts with $e_1$ going from $s$ to some $u_1\in V(G)$, then for $\ell^*=\dist_{G^*}(s,t)$, $\ell^*_1=\dist_{G^*}(u_1, t)$, $\ell^*_0=w(e_1)+\ell^*_1$ and
    \begin{itemize}
        \item $R^*=\{s,t\}$;
        \item $D^*=\{\ell^*,0\}$;
        \item $d^*=f^*=\{s\to 0, t\to 1\}$;
        \item $F^*_0=\{s\to \ell^*_0, t\to 0\};$
        \item $F^*_1=\{s\to \ell^*_1, t\to 0\};$
    \end{itemize}
    $\operatorname{OPT}(a_0, D^*,d^*,R^*,F_0^*,F_1^*,f^*)$ equals the number of arc edits required to obtain $G^*$.
    This follows just from the definition of $\operatorname{OPT}$.

    We move on to computing  $\operatorname{OPT}(a, D,d,R,F_0,F_1,f)$ (for arbitrary values of $a, D,d,R,F_0,F_1,f$) via dynamic programming over $\mathcal{T}$.
    Note that choices of parameters violating the condition 4 give the corresponding value of $\operatorname{OPT}$ equal to $+\infty$.
    We assume that condition 4 always holds for a choice of arguments of  $\operatorname{OPT}$.
    Similarly, we assume that $s,t\in R$, $D(t)=0$ and $D(s)<+\infty$ always hold as well.
    
    The computing is done in a recursive top-down approach on the structure of $\mathcal{T}$.
    There are several cases of processing a node via its children nodes, depending on the tree decomposition node type.
    
    \medskip\noindent\textbf{Forget node.}
    The node $a$ has exactly one child $b$.
    Then $E_a=E_b$ and $B_a=B_b\setminus\{u\}$ for some $u\in V(G)$.
    By definition, $\operatorname{OPT}(a, D,d,R,F_0,F_1,f)$ equals to the minimum of
    \begin{multline*}
    \min\limits_{\ell,\ell_0,\ell_1\in L}
    \operatorname{OPT}(b,\\D\cup\{u\to\ell\},d\cup\{u\to 0\},\\R\cup \{u\},\\F_0\cup\{u\to\ell_0\}, F_1\cup\{u\to\ell_1\}, \\f\cup\{u\to 0\}),
    \end{multline*}
    corresponding to the solution where $u$ is reachable, and
    $$\min\limits_{\ell\in L\cup\{+\infty\}}\operatorname{OPT}(b,D\cup \{u\to\ell\}, d\cup\{u\to 0\},R,F_0,F_1,f),$$
    which corresponds to the case where $u$ is not on any agent's path.

    Note that $\ell_0, \ell_1$ in the first part are only taken over those satisfying $(\ell_0-\ell_1)+\beta\cdot \ell_1 \le \beta\cdot r$ (condition 4).
    
    \medskip\noindent\textbf{Introduce vertex node.}
    The node $a$ has exactly one child $b$.
    Then $E_a=E_b$ and $B_a=B_b\cup\{u\}$ for some $u\in V(G)$.
    Moreover, there are no arcs incident to $u$ in $E_a$.
    By definition, $\operatorname{OPT}(a, D,d,R,F_0,F_1,f)$ equals $+\infty$, if $d(u)=0$, but $D(u)\neq +\infty$ because for any choice of $\rss$ and $v\in V(G)$, $\dist'(u,v)=+\infty$.
    Similarly, it equals $+\infty$ if $u\in R$ but $f(u)= 0$.
    Otherwise, it equals
    $$\operatorname{OPT}(a, D\setminus \{u\},d\setminus \{u\},R,F_0,F_1,f),$$
    if $u\notin R$, or
    $$\operatorname{OPT}(a, D\setminus \{u\},d\setminus \{u\},R\setminus \{u\},F_0\setminus \{u\},F_1\setminus \{u\},f\setminus \{u\}),$$
    if $u\in R$.

    \medskip\noindent\textbf{Introduce edge node.}
    The node $a$ has exactly one child $b$. 
    Then $B_a=B_b$ and $E_a=E_b\cup\{e\}$ for some arc $e\in E(G)\cup A$ connecting some pair of vertices $u,v\in V(G)$.
    It is either $e\in E(G)$, then we consider either keeping $e$ in the graph or deleting it, or $e\in A$, then we choose between adding or skipping it respectively.
    These cases are in fact identical, and the difference is only the edit cost: we pay $0$ for having $e\in E(G)$ in the solution, and we pay $1$ for not having $e\in E(G)$ in the solution.
    For $e\in A$, we pay $1$ and $0$ respectively instead.
    In what follows, by $c_1$ and $c_0$ we denote the costs of having $e$ or not having $e$ in the solution respectively. 

    If $e$ does not belong to the optimal solution for $\operatorname{OPT}(a, D,d,R,F_0,F_1,f)$, then it equals
    \begin{equation}\label{intro-edge-1}
    \operatorname{OPT}(b, D,d,R,F_0,F_1,f)+c_0,
    \end{equation}
    since the solution profile is the same for $b$ but we pay something for not having the edge in the solution.

    Other possible case is when $e$ belongs to the solution.
    This is only possible if:
    \begin{itemize}
        \item $D(u)\le w(e)+D(v),$ and,
        \item $(F_0(u)-F_1(u))+\beta \cdot F_1(u)\le w(e)+\beta\cdot D(v),$ and
    \end{itemize}
    if $u\in R$ but $v\notin R$, or $e$ is $T$-avoiding, the last inequality should be strict (the agent should not follow $e$).

    If $e$ is taken into the solution, it enforces some changes in the profile.
    If $e$ satisfies $D(u)=w(e)+D(v)$ (condition 2), then we can propagate $d(u)=1$ for the profile of $b$.
    If $e$ satisfies $F_0(u)-F_1(u)=w(e)$ and $F_1(v)=D(v)$ (condition 5), then we propagate $f(u)=1$.
    Denote the obtained mappings by $d'$ and $f'$.
    Then (if $e$ should be taken into the optimal solution for the current profile), $\operatorname{OPT}(a, D,d,R,F_0,F_1,f)$ equals
    \begin{equation}\label{intro-edge-2}
    \operatorname{OPT}(b, D,d',R,F_0,F_1,f')+c_1.
    \end{equation}

    Consequently, to evaluate $\operatorname{OPT}(a, D,d,R,F_0,F_1,f)$, we just take the minimum of \eqref{intro-edge-1} and \eqref{intro-edge-2}, where the second is taken into account only if having this arc is compatible with the profile as discussed in the previous paragraph.

    \medskip\noindent\textbf{Leaf node.}
    Then $a$ has no children in $\mathcal{T}$, $E_a=\emptyset$ and $B_a=\{s,t\}$.
    To evaluate $\operatorname{OPT}(a, D,d,R,F_0,F_1,f)$, we just have to check $d(s)=1$ and $f(s)=1$.
    We set the value to $0$ if both holds, and put $\operatorname{OPT}$ equal to $+\infty$ otherwise.
    This agrees with the definition of $\operatorname{OPT}$.

    \medskip\noindent\textbf{Join node.}
    This is the only case when $a$ has two child nodes $b_1,b_2$.
    Then $B_a=B_{b_1}=B_{b_2}$, and $E_{b_1}$ and $E_{b_2}$ form a partition of $E_a$.
    To compute $\operatorname{OPT}(a, D,d,R,F_0,F_1,f)$, we have to split $d$ and $f$ into two separate profiles and sum up the edits in $E_{b_1}$ and $E_{b_2}$.
    This is due the distances ($D$) or follow-arcs ($F_0$, $F_1$) can be implemented (conditions 2 and 5 in definition of $\operatorname{OPT}$) in either $E_{b_1}$ or $E_{b_2}$, and we have to consider every possible case.
    Then $\operatorname{OPT}(a, D,d,R,F_0,F_1,f)$ equals
    $$\min\limits_{
    \begin{matrix}
    d^1\cdot d^2=d\\
    f^1\cdot f^2=f
    \end{matrix}
    }\left[
    \sum_{j=1}^2
    \operatorname{OPT}(b_j, D,d^j,R,F_0,F_1,f^j) \right],$$
    where $d^1,d^2$ and $f^1,f^2$ are taken over all mappings from $B_a$ and $R$ respectively to $\{0,1\}$.
    The pairs should satisfy $d^1(u)\cdot d^2(u)=d(u)$ for each $u\in B_a$ and $f^1(u)\cdot f^2(u)=f(u)$ for each $u\in R$.

    The list of the cases of tree decomposition node types is exhausted.
    This finishes the description of the dynamic programming routine of the algorithm.

\medskip\noindent\textbf{Restoring the answer from dynamic programming values.}
To find the minimum number of arc edits, we iterate all $\ell^*,\ell^*_0,\ell^*_1\in L$ and construct $D^*,d^*,R^*,F^*_0,F^*_1,f^*$ as in the beginning of this proof.
The minimum number of arc edits equals 
$$\min\limits_{\ell^*,\ell^*_0,\ell^*_1\in L}\operatorname{OPT}(a_0,D^*,d^*,R^*,F^*_0,F^*_1,f^*).$$

The sequence of edits can be restored from the transitions of the dynamic programming in the usual way.

\medskip\noindent\textbf{Correctness.}
First, $\operatorname{OPT}(a_0,D^*,d^*,R^*,F^*_0,F^*_1,f^*)\le k$, where $k$ is the minimum number of arc edits required to obtain the solution graph $G^*$.
To see this, one can follow the top-down computation from the root state $\operatorname{OPT}(a_0, D^*,d^*,R^*,F^*_0,F^*_1,f^*)$ to the leaves, in each node making the choices corresponding to the solution graph $G^*$.

Second, to prove $\operatorname{OPT}(a_0,D,d,\{s,t\},F_0,F_1,f)\ge k$, one can restore the arc edit sequence $\operatorname{OPT}(a_0,D,d,\{s,t\},F_0,F_1,f)$ in the way natural for dynamic programming and show that it forms a solution to the initial instance of \tpathedit.
Indeed, the restored dynamic programming transition sequence guarantees that $R\subset V(G)$ forms a set of vertices reachable by the agent.
There is no arcs from $R$ to $V(G)\setminus R$ that the agent could follow by having the least possible perceived cost.
On the other hand, the agent has an arc from $v$ to $u\in R$ that it can follow at each vertex $v\in R\setminus \{t\}$.
Finally, there is no $T$-avoiding arc that the agent could follow.

\medskip\noindent\textbf{Running time.}
To count the number of states of $\OPT$, note that \begin{itemize}
    \item $a$ has $|V(\mathcal{T})|=m^{\Oh(1)}$ options;
    \item $R, d, f$ have $2^{\Oh(\tw)}$ options;
    \item $D$ has $(|L|+1)^{\Oh(\tw)}$ options;
    \item $F_0, F_1$ have $|L|^{\Oh(\tw)}$ options.
\end{itemize}
In total, there are at most $(2|L|+2)^{\Oh(\tw)}\cdot m^{\Oh(1)}$ states for $\OPT$, which is upper-bounded by $|L|^{\Oh(\tw)}$ since $|L|\ge 2$.

It is left to give an upper bound for the time required to compute each value of $\OPT$.
Note that for each node type, the computation is done in time polynomial in $m$, except for the join nodes.
To compute $\OPT(a,D,d,R,F_0,F_1,f)$ in a join node, we consider at most $3^{|R|}$ options of $(d^1,d^2)$ and at most $3^{|R|}$ of $(f^1,f^2)$. 
This is also upper-bounded by $|L|^{\Oh(\tw)}$ since $|L|\ge 2$.
Restoring the dynamic programming transitions is done in the same time.

The total running time is thus at most $|L|^{\Oh(\tw)}$.
The proof is complete.
\end{proof}

\end{document}